




\documentclass[pra,aps,superscriptaddress,twocolumn,nopacs,longbibliography,nofootinbib]{revtex4-1}			
\usepackage[latin1]{inputenc}
\usepackage{amsthm} 
\usepackage{amsmath}
\usepackage{mathtools}
\usepackage{graphicx} 
\usepackage{subfigure} 
\usepackage{amsfonts}
\usepackage{hyperref} 
\usepackage{xcolor} 
\usepackage{verbatim}
\usepackage{hyperref}
\usepackage{soul}
\usepackage{ulem}
\usepackage{bbold}
\usepackage{enumitem}
\usepackage{bbm}
\usepackage{tikz}

\hypersetup{
	colorlinks=true,   
	linkcolor=blue,    
	citecolor=blue,    
	urlcolor=blue      
}


\def\B{ {\cal B} }
\def\C{ {\cal C} }
\def\D{ {\cal D} }

\def\M{ {\cal M} }
\def\P{ {\cal P} }
\def\S{ {\cal S} }
\def\T{ {\cal T} }
\def\U{ {\cal U} }
\def\I{ {\cal I} }

\newcommand{\diag}[1]{\mathrm{diag}\left(#1\right)}
\def\>{\rangle}
\def\<{\langle}
\newcommand{\bra}[1]{\langle {#1} |}
\newcommand{\ket}[1]{| {#1} \rangle}
 
\newcommand{\ketbra}[2]{\ensuremath{\left|#1\right\rangle\!\!\left\langle#2\right|}}
\newcommand{\matrixel}[3]{\ensuremath{\left\langle #1 \vphantom{#2#3} \right| #2 \left| #3 \vphantom{#1#2} \right\rangle}}
\newcommand{\tr}[1]{\mathrm{Tr}\left( #1 \right)}
\newcommand{\trr}[2]{\mathrm{Tr}_{#1}\left( #2 \right)}
\newcommand{\norm}[1]{\left\lVert#1\right\rVert}
\newcommand{\iden}{\mathbb{1}}
\newcommand{\vect}[1]{|#1\rangle\rangle}

\renewcommand{\v}[1]{\ensuremath{\boldsymbol #1}}

\DeclareMathOperator*{\argmax}{argmax}
\newcommand*{\Scale}[2][4]{\scalebox{#1}{$#2$}}%

\theoremstyle{plain}
\newtheorem{thm}{Theorem}
\newtheorem{lem}[thm]{Lemma}

\newtheorem{prop}[thm]{Proposition}
\theoremstyle{definition}

\theoremstyle{remark}


\begin{document}
\normalem
	
\title{Coherifying quantum channels}

\author{Kamil Korzekwa}
\affiliation{Centre for Engineered Quantum Systems, School of Physics, The University of Sydney, Sydney, NSW 2006, Australia}
\author{Stanis{\l}aw Czach{\'o}rski}
\affiliation{Faculty of Physics, Astronomy and Applied Computer Science, Jagiellonian University, 30-348 Krak{\'o}w, Poland}
\author{Zbigniew Pucha{\l}a}
\affiliation{Faculty of Physics, Astronomy and Applied Computer Science, Jagiellonian University, 30-348 Krak{\'o}w, Poland}
\affiliation{Institute of Theoretical and Applied Informatics, Polish Academy of Sciences, 44-100 Gliwice, Poland}
\author{Karol {\.Z}yczkowski}
\affiliation{Faculty of Physics, Astronomy and Applied Computer Science, Jagiellonian University, 30-348 Krak{\'o}w, Poland}
\affiliation{Center for Theoretical Physics, Polish Academy of Sciences, 02-668 Warszawa, Poland}

\begin{abstract}

Is it always possible to explain random stochastic transitions between states of a finite-dimensional system as arising from the deterministic quantum evolution of the system? If not, then what is the minimal amount of randomness required by quantum theory to explain a given stochastic process? Here, we address this problem by studying possible \emph{coherifications} of a quantum channel $\Phi$, i.e., we look for channels $\Phi^\C$ that induce the same classical transitions $T$, but are ``more coherent''. To quantify the coherence of a channel $\Phi$ we measure the coherence of the corresponding Jamio{\l}kowski state $J_{\Phi}$. We show that the classical transition matrix $T$ can be coherified to reversible unitary dynamics if and only if $T$ is unistochastic. Otherwise the Jamio{\l}kowski state $J_\Phi^\C$ of the optimally coherified channel is mixed, and the dynamics must necessarily be irreversible. To assess the extent to which an optimal process $\Phi^\C$ is indeterministic we find explicit bounds on the entropy and purity of $J_\Phi^\C$, and relate the latter to the unitarity of $\Phi^\C$. We also find optimal coherifications for several classes of channels, including all one-qubit channels. Finally, we provide a non-optimal coherification procedure that works for an arbitrary channel $\Phi$ and reduces its rank (the minimal number of required Kraus operators) from $d^2$ to $d$.

\end{abstract}

\maketitle

\section{Introduction}

Random processes are ubiquitous in both classical and quantum physics. However, the nature of randomness in these two regimes differs significantly. On the one hand, classical random evolution is necessarily irreversible. On the other hand, quantum evolution may be completely deterministic (and thus reversible if no measurement is performed), but nevertheless lead to random measurement outcomes of observable $A$ by transforming a system into a coherent superposition of eigenstates of $A$. When probing the dynamics of the system one can therefore observe the same random transitions, irrespectively of whether the evolution is coherent or incoherent. The question then arises: to what extent an observed random transformation can be explained via the underlying deterministic and coherent process, and how much unavoidable classical randomness must be involved in it?

To formulate this problem more precisely, consider a \mbox{$d$-dimensional} physical system undergoing some unknown evolution. In order to characterize it, one first measures the system, finding it in some well-defined state $j$, e.g., an eigenstate of observable $A$. One then allows the system to evolve for time $\tau$ and performs the same measurement again, this time finding the system in state $i$. By repeating this procedure many times and collecting the statistics of measurement outcomes, one can reconstruct the transition matrix $T$, with entries $T_{ij}$ describing transition probabilities between states $j$ and $i$. Now, for a truly random classical process, repeating it (e.g., by letting the system evolve for~$2\tau$ instead of~$\tau$) leads to the evolution described by $T^2$. We illustrate this in Fig.~\ref{fig:classical_quantum}a for an exemplary two-dimensional system. However, in quantum physics, different transitions (paths) of $T$ can interfere with each other, so that the composition of two processes will generally not be described by a transition matrix $T^2$. In particular, the compound process can even become fully deterministic, leading to the complete disappearance of the observed randomness (see Fig.~\ref{fig:classical_quantum}b). Our question can then be rephrased as: what is the optimal \emph{coherification} of the random process described by a classical transition matrix~$T$?

A more formal motivation for our studies comes from the resource-theoretic approach to quantum information. To explain it, let us first consider a simpler and better-known problem: how coherent is a given quantum state $\rho$ of a $d$-dimensional system, and to what extent can it be transformed into a ``more coherent'' state? Note that we do not refer here to the notion of spin-coherent states, which does not depend on the choice of basis~\cite{zfg90}, but rather to a more recent concept of coherence with respect to a given basis~\cite{baumgratz2014quantifying}, distinguished for instance by the eigenbasis of the system's Hamiltonian. Any state represented by a density matrix that is diagonal in the preferred basis is incoherent in that sense, as it corresponds to a statistical mixture of classical states. On the other hand, a quantum state whose non-diagonal elements, called \emph{coherences}, do not vanish may lead to non-classical effects of quantum interference. However, a generic system-environment interaction leads to the process of \emph{decoherence}, due to which the off-diagonal entries tend to zero and the state becomes classical.

From the perspective of emerging quantum information technologies, coherence can be treated as a resource~\cite{BG15} allowing one to perform tasks impossible otherwise. It is then crucial to assess which quantum states are more valuable, i.e., have more coherence. One is thus confronted with the problem of quantifying coherence~\cite{baumgratz2014quantifying}, which effectively means ordering the set of quantum states according to their coherence properties. Several competing measures of coherence of a quantum state were recently discussed in the literature, e.g., the $l_1$ norm of the off-diagonal entries of a density matrix or the relative entropy between a state and its decohered version (for a comprehensive review see Ref.~\cite{streltsov2016quantum} and references therein). Note that the diagonal density matrices, appearing as a result of the process of decoherence, are situated at the very bottom of this ordering, as they are classical and do not carry any coherence at all.

The problem of quantifying coherence has been taken a step farther by focusing on \emph{cohering power} of quantum channels~\cite{mani2015cohering}, i.e., studying the degree to which a quantum map can create coherence in an initially incoherent state. One can analyze the maximal or the average gain of coherence, where the average is taken over a suitable set of incoherent states. Such an approach is applicable to unitary transformations~\cite{GDEP16,zanardi2017coherence} and to non-unitary operations~\cite{zanardi2017measures,ZHLF15,BKZW17}, and in this way quantum channels can be ordered depending on their power to create coherence.

Within this approach, however, quantum states and their coherence are still the central objects of interest. Here, we take an alternative path and make quantum channels themselves the main focus of our study. Employing the well known Jamio{\l}kowski--Choi isomorphism~\cite{jamiolkowski1972linear,choi1975completely}, i.e., the fact that every quantum channel $\Phi$ is isomorphic to a bipartite quantum state $J_\Phi$~\cite{zyczkowski2004duality}, we propose to apply the measures of coherence to bipartite states associated with a given channel. This way we can quantitatively investigate the problem posed at the beginning of this section: how coherent can a given random transformation be? More precisely, for a given stochastic transition matrix $T$ we look for the maximally coherent quantum channel $\Phi^\C$, which under complete decoherence collapses to $T$, so that the diagonal parts of both Jamio{\l}kowski states are equal. We first prove that a channel whose classical action is described by a transition matrix $T$ can be coherified to a reversible unitary transformation if and only if $T$ is unistochastic. We then derive general upper and lower bounds on the optimal coherification of a channel described by a non-unistochastic $T$. Finally, we construct optimally coherified maps for any one-qubit channel and certain classes of channels acting on higher dimensional systems. 

The paper is organized as follows. In Sec.~\ref{sec:setting} we set the scene by introducing necessary concepts concerning the coherence and mixedness of quantum states and channels, and formulate the optimal coherification problem. General limitations for coherifying quantum channels are then derived and analyzed in Sec.~\ref{sec:limitiations}, where we also study particular families of maps in detail. In Sec.~\ref{sec:relevance} we discuss physical interpretation of coherence and purity of a channel, and relate these quantities to unitarity~\cite{wallman2015estimating} and cohering power~\cite{mani2015cohering}. Concluding remarks are presented in the final Sec.~\ref{sec:conclusions}, while some technical results, predominantly concerning channels acting on two- and three-dimensional spaces, are relegated to Appendices~\ref{app:majo_proof}-\ref{app:polygon}.

\begin{figure}[t]
	\includegraphics[width=\columnwidth]{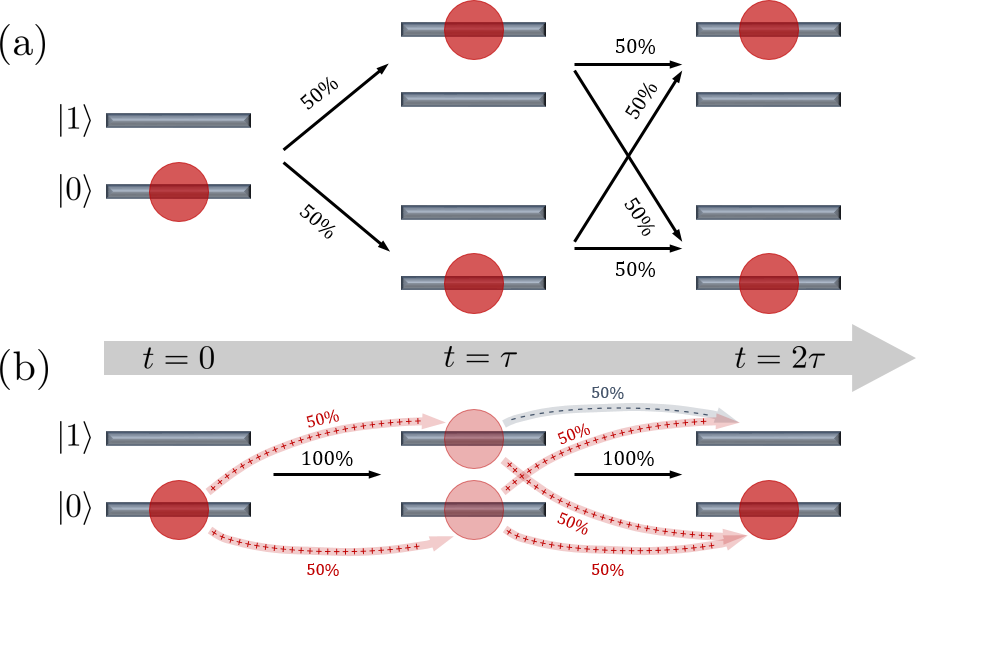}
	\caption{\label{fig:classical_quantum} \emph{Classical versus quantum randomness.} A two-dimensional system is initially prepared in a state $\ketbra{0}{0}$. (a)~The random classical evolution, running between times $0$ and $\tau$ and mapping between states $\ketbra{0}{0}$ and $\ketbra{1}{1}$, is described by the transition matrix $T$ with $T_{ij}=1/2$ for all $i,j$. The resulting state of the system at time $\tau$ is maximally mixed, \mbox{$(\ketbra{0}{0}+\ketbra{1}{1})/2$}. Further evolution between times $\tau$ and $2\tau$ is also described by $T$, leading to the total evolution being described by $T^2$ and leaving the system in the maximally mixed state. (b)~The quantum evolution between times $0$ and $\tau$ is described by a unitary operator $U$ with $U_{11}=-1/\sqrt{2}$ and $U_{ij}=1/\sqrt{2}$ otherwise (hence $U$ is a normalized $2\times 2$ Hadamard matrix). The resulting state of the system at time $\tau$ is \mbox{$\ketbra{+}{+}$}, with \mbox{$\ket{+}=(\ket{0}+\ket{1})/\sqrt{2}$}. Note that if a measurement were performed at time $\tau$, one would recover transition probabilities $T_{ij}$ as in~(a). However, if the system evolves further between times $\tau$ and $2\tau$ according to $U$, due to interference of the paths, the state of the system becomes $\ketbra{0}{0}$, and thus the total evolution is described by the identity matrix, $U^2=\iden$.}
\end{figure}

\section{Setting the scene}
\label{sec:setting}

\subsection{Coherence and mixedness of quantum states}

A state of a finite-dimensional quantum system is described by a density operator $\rho$ acting on a $d$-dimensional Hilbert space ${\cal H}_d$ that is positive, $\rho\geq 0$, and normalized by a trace condition, $\tr{\rho}=1$. The convex set of all density matrices of size $d$, denoted by $\M_d$, has $d^2-1$ dimensions and contains the $(d-1)$-dimensional simplex ${\cal P}_d$ of normalized probability vectors of length~$d$. By $\v{\lambda}(\rho)$ we will denote the probability vector with entries given by the eigenvalues of $\rho$ arranged in a non-increasing order.

A state is pure if $\rho=\rho^2$ (equivalently if \mbox{$\v{\lambda}(\rho)=[1,0,\dots,0]$}), so it can be represented by a \mbox{$1$-dimensional} projector, \mbox{$\rho=\ketbra{\psi}{\psi}$}; and mixed otherwise. Typical measures used to quantify the degree of mixedness of a given state~\cite{nielsen2010quantum,BZ17} include the von Neumann entropy,\footnote{Within this work we will use $S(\cdot)$ to denote both the von Neumann entropy of a density matrix, as well as the Shannon entropy of a probability distribution.} 
\begin{subequations}
\begin{equation}
S(\rho)=-\tr{\rho \log \rho}=-\sum_i \lambda_i(\rho)\log\lambda_i(\rho),
\end{equation}
and purity,
\begin{equation}
\gamma(\rho)=\tr{\rho^2}=\sum_{i,j}|\rho_{ij}|^2=\v{\lambda(\rho)}\cdot\v{\lambda(\rho)}.
\end{equation}
\end{subequations}
Note that, since the above measures are directly related to the eigenvalues of $\rho$, they are unitarily-invariant, and thus the mixedness of a quantum state is preserved under unitary dynamics.

On the other hand, in order to study coherence of \mbox{$\rho \in \M_d$} one first needs to specify a basis with respect to which the coherence is measured~\cite{baumgratz2014quantifying}. This basis may be distinguished by the problem under study, e.g., within quantum thermodynamics one is mainly concerned with superposition of energy eigenstates~\cite{lostaglio2015description,lostaglio2015quantum}. Here, however, we will study the problem in a general quantum information context, and thus we will simply fix an orthonormal basis $\{\ket{i}\}_{i=1}^d$. We say that a state is incoherent, or classical, when it is diagonal in the chosen basis. Classical states can be alternatively represented by a probability distribution $\v{p}=\diag{\rho}$, where $\diag{\rho}$ denotes a mapping of a density matrix $\rho$ into a probability vector $\v{p} \in \P_d$ with $p_i=\rho_{ii}$. With a slight abuse of the notation we will also write $\rho\in\P_d$ if $\rho$ is diagonal. Before we introduce measures of coherence, let us first define a \emph{completely decohering} quantum channel~$\D$, 
\begin{equation}
\label{eq:decoh_state}
\D(\rho)=\rho^\D:=\sum_i \matrixel{i}{\rho}{i}\ketbra{i}{i}.
\end{equation}
Note that under the action of $\D$ any quantum state $\rho$ undergoes complete decoherence and becomes diagonal in the preferred basis. Thus, $\rho^\D$ and the associated probability distribution $\v{p}=\diag{\rho}$ can be considered as the classical version of a general quantum state $\rho$. Notice also that $\D$ is a projector onto $\P_d$ as \mbox{$\rho^\D\in\P_d$} and \mbox{$\D(\rho^\D)=\rho^\D$} for all $\rho$.

The problem of quantifying the amount of coherence present in a state has been addressed in Ref.~\cite{baumgratz2014quantifying}, while an earlier work~\cite{aberg2006superposition} was devoted to quantifying quantum superposition. Two particular measures of coherence that we will focus on in this work are the relative entropy of coherence,
\begin{subequations}
\begin{equation}
\label{eq:c_ent}\Scale[0.92]{
\C_{\mathrm{e}}(\rho):=S(\rho||\rho^\D)=S(\rho^\D)-S(\rho)=S(\v{p})-S(\v{\lambda}(\rho)),}
\end{equation}
and the $2$-norm of coherence\footnote{We note that from the resource-theoretic perspective~\cite{baumgratz2014quantifying} $\C_{2}$ does not strictly satisfy all desirable requirements for a coherence measure. While it is true that under incoherent CPTP maps the $2$-norm of coherence is non-increasing, it can increase on average under selective measurements. However, in our study of coherence of quantum channels, the resource-theoretic constraints have no clear physical meaning, and thus $\C_2$ is a completely legitimate measure of coherence.} 
\begin{equation}
\label{eq:c_2}\Scale[0.9]{
\!\!\!\C_{2}(\rho):=\norm{\rho-\rho^\D}_{2}=\gamma(\rho)-\gamma(\rho^\D)=\v{\lambda}(\rho)\cdot\v{\lambda}(\rho)-\v{p}\cdot\v{p}.\!\!\!}
\end{equation}
\end{subequations}
It is evident that the measures of coherence are directly related to the measures of mixedness. More precisely, the relative entropy of coherence is the difference between the entropy of a classical version of a state and the quantum state itself; and the $2$-norm of coherence is the difference between the purity of a quantum state and the purity of its classical version.

Among the family of states with a fixed spectrum, i.e., belonging to a unitary orbit, the measures of mixedness are equal, but the measures of coherence vary significantly. The minimal coherence, equal to zero, is obtained for the diagonal state $U^{\dagger}\rho U$, where $U$ is the unitary matrix containing the eigenvectors of $\rho$. The maximal coherence is achieved by the contradiagonal state~\cite{lakshminarayan2014diagonal}, \mbox{$\rho^{\rm cont}=HU^{\dagger}\rho UH^{\dagger}$}, where $H$ is a Fourier matrix (or, more generally, a complex Hadamard matrix~\cite{tadej2006concise}), which is unitary and has entries with the same modulus, $|H_{ij}|^2=1/d$. Since all diagonal elements of the contradiagonal state are equal, \mbox{$\rho^{\rm cont}_{ii}=1/d$}, one gets \mbox{$\C_{\mathrm{e}}(\rho^{\rm cont})=\log d-S(\rho)$} and \mbox{$\C_{2}(\rho^{\rm cont})=\gamma(\rho)-1/d$}.

On the other hand, among the family of states with a fixed diagonal, i.e., quantum states that under the action of $\D$ decohere to the same classical state, both mixedness and coherence measures vary. However, they are maximized and minimized by the same states, which can be directly inferred from Eqs.~\eqref{eq:c_ent}-\eqref{eq:c_2}. The minimum can be obtained by acting with the decohering channel $\D$ (that leaves the diagonal unchanged) on any member of the family, leading to zero coherence. In a similar fashion we can define an optimal \emph{coherifying} transformation~$\C$ (which should not be confused with coherence measures) that maps any member of the family into a state that maximizes purity (and thus coherence),
\begin{equation}
\label{eq:coh1}
\C(\rho)=\rho^{\C} = \ketbra{\psi}{\psi},
{\rm  \ such \ that \ }
\diag{\rho^\C}= {\rm diag}(\rho).
\end{equation}
This problem has a simple solution for any mixed state~$\rho$. Identifying its diagonal elements with components of a probability vector $\v{p}$, one can write explicitly a family of optimally coherifying transformations
\begin{equation}
\label{eq:coh2}
\rho \to {\rho^\C}: \ \rho^\C_{ij} = \sqrt{p_i p_j}\; e^{i(\phi_i-\phi_j)},
\end{equation}
where the phases, $\phi_i\in [0,2 \pi)$, are arbitrary. The mixedness of such coherified states is zero and coherence achieves its maximal value, $\C_{\mathrm{e}}(\rho^\C)=S(\v{p})$ and $\C_2(\rho^\C)=1-\v{p}\cdot\v{p}$. In our work we will also refer to non-optimal coherification transformations that map a diagonal state into a state with the same diagonal but some non-zero off-diagonal terms. In Fig.~\ref{fig:coher1} we illustrate the ideas of decoherence and coherification of quantum states using low-dimensional examples.

Let us emphasize that Eq.~\eqref{eq:coh1} does not describe a realistic physical process, but rather it provides an answer to a legitimate question concerning the possible past of an irreversible quantum dynamics. Such a fictitious coherification can be treated as a kind of a formal inverse of the process of decoherence. More precisely, for a diagonal state $\rho\in\P_d$ we have $\D(\C(\rho))=\rho$; and for a pure state $\ketbra{\psi}{\psi}$ we have $\C(\D(\ketbra{\psi}{\psi}))\sim\ketbra{\psi}{\psi}$, where the equivalence is up to phase factors of the off-diagonal terms.

Finally, note that coherification can be compared to the known procedure~\cite{nielsen2010quantum,KKMB06} of \emph{purification} of a quantum state. Any mixed quantum state can be purified at the expense of increasing the dimension of the Hilbert space. More precisely, for any state $\rho \in \M_d$ its purification is given by a pure state \mbox{$\ket{\psi_{AB}}\in {\cal H}_d \otimes {\cal H}_d$} of the extended system, such that its partial trace reads \mbox{$\trr{B}{\ketbra{\psi_{AB}}{\psi_{AB}}}=\rho$}.
The Schmidt vector of $|\psi_{AB}\rangle$ coincides with the spectrum of $\rho$, e.g., if the state $\rho$ is maximally mixed, the state $|\psi_{AB}\rangle$ is maximally entangled. Both formal procedures are not unique and they allow one to find possible preimages of $\rho$ with respect to non-invertible physical operations. Namely, purification yields states of an extended system which are transformed into $\rho$ by a partial trace; and coherification of a state $\rho$ provides states of the same size which decay into $\rho^\D$ due to decoherence. 

\begin{figure}[t]
	\includegraphics[width=\columnwidth]{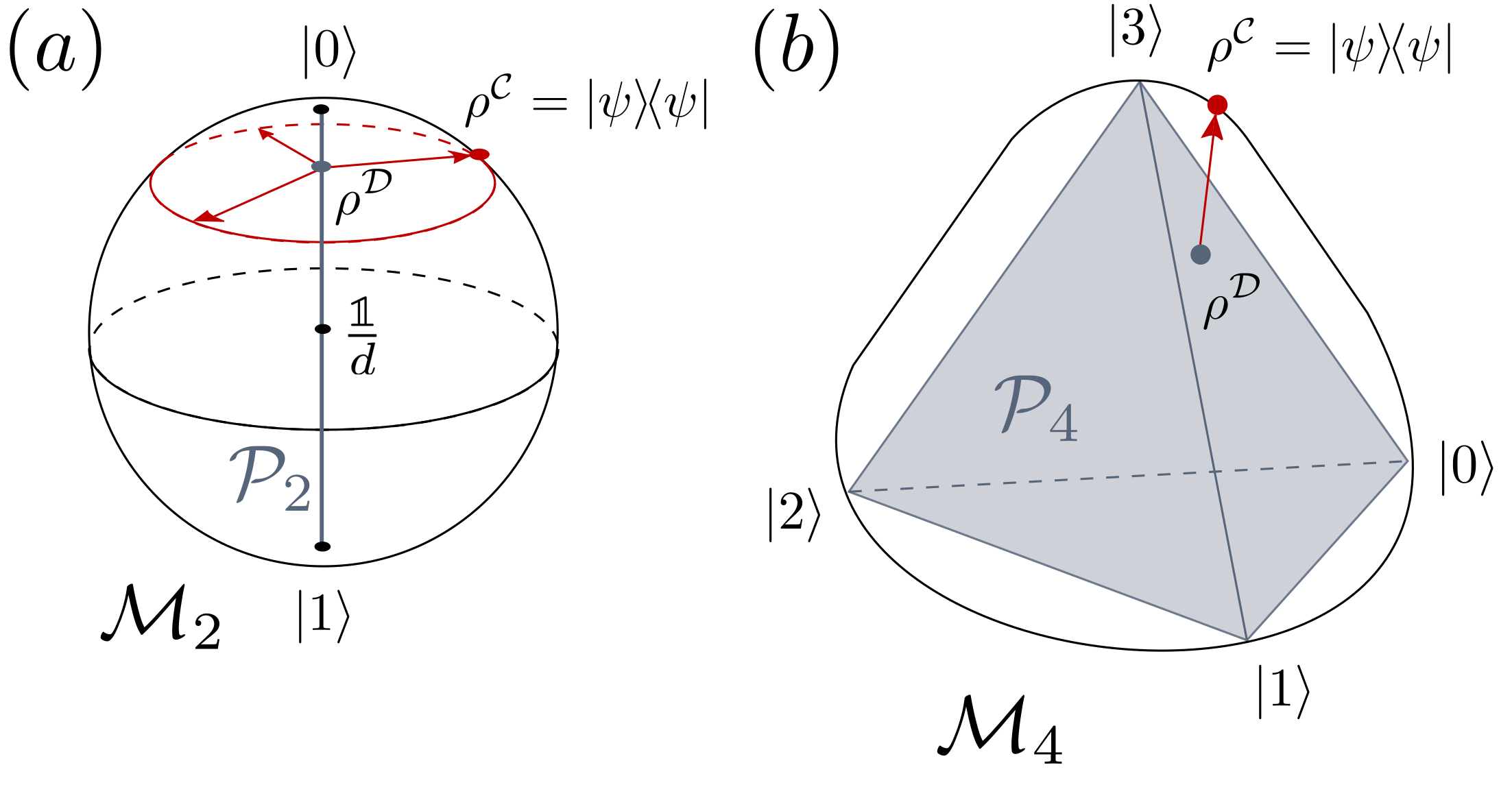}
	\caption{\label{fig:coher1} 
\emph{Decoherence and coherification of quantum states.}
(a)~For $d=2$ a pure state $\rho=\ketbra{\psi}{\psi}$ decoheres into the classical state $\rho^\D \in \mathcal{P}_2$ lying on the axis of the Bloch ball. Its coherification, $\C(\rho^\D)$ gives the entire ring of pure states $\rho^\C$ that decohere into $\rho^\D$; (b) For $d=4$ classical states from the probability tetrahedron, $\rho\in \P_4$, can be coherified into pure states $\rho^\C$ from the boundary of  $\M_4$.}
\end{figure}

\subsection{Coherence and mixedness of quantum channels}
\label{sec:coh_maps}

In this work we generalize the notion of coherence and mixedness of quantum states to quantum channels, i.e., completely positive trace preserving (CPTP) maps acting on density matrices of order $d$. We will denote by ${\cal S}_d$ the set of all quantum channels, also called stochastic maps, acting  on ${\cal M}_d$. Recall that for any $\Phi\in\S_d$ one can define the associated \emph{Jamio{\l}kowski state}~\cite{jamiolkowski1972linear}, as the image of the extended map acting on a maximally entangled state,
\begin{equation}
\label{eq:jamiol}
J_{\Phi} = \frac{1}{d}(\Phi \otimes {\I})\ketbra{\Omega}{\Omega},
\end{equation}
with \mbox{$\ket{\Omega}=\sum_i\ket{ii}$} and $\I$ denoting the identity channel. Note also that the Jamio{\l}kowski state is proportional to the dynamical matrix of Choi~\cite{choi1975completely}, so that $J_\Phi=\Phi^R/d$. Here, with a slight abuse of the notation, $\Phi$ denotes the representation of the channel as a matrix of size $d^2$, i.e., a superoperator with entries in the preferred basis given by $\Phi_{ij,kl}=\tr{\ketbra{j}{i}\Phi(\ketbra{k}{l})}$; and the reshuffling transformation, $X^R$, exchanges elements of a matrix in such a way that square blocks of size $d$ after reshaping form rows of length $d^2$, so that \mbox{$(X_{ij,kl})^R = X_{ik,jl}$} -- see Ref.~\cite{zyczkowski2004duality} for further details. Finally, for every quantum channel there exists a Kraus decomposition~\cite{nielsen2010quantum}, or the operator-sum representation, of the form
\begin{equation}
\label{eq:kraus_decomposition}
\Phi(\cdot)=\sum_i K_i(\cdot)K_i^\dagger,
\end{equation}
where $K_i$ are called Kraus operators. Due to trace preserving condition these satisfy \mbox{$\sum_iK_i^\dagger K_i=\iden$}, where $\iden$ denotes the identity matrix of size $d$.

The condition of $\Phi \in  {\cal S}_d$ is equivalent to~\cite{jamiolkowski1972linear}
\begin{subequations}
	\begin{eqnarray}
	J_\Phi&\geq& 0,\label{eq:CP}\\
	\trr{1}{J_\Phi}&=&\frac{\iden}{d}.\label{eq:TP}
	\end{eqnarray}
\end{subequations}
These conditions imply that diagonal elements (in the preferred basis) of the Jamio{\l}kowski state correspond to the entries of a \mbox{$d\times d$} (column-stochastic) transition matrix $T$,
\begin{equation}
\matrixel{ij}{J_\Phi}{ij}=\frac{1}{d}T_{ij}\label{eq:diag_stoch},
\end{equation}
with $T_{ij}\geq 0$ and \mbox{$\sum_i T_{ij}=1$}. 
The set of stochastic transition matrices of order $d$ will be denoted by ${\cal T}_d$. This set of classical maps has $d(d-1)$ dimensions and can be embedded inside the set ${\cal S}_d$ of quantum maps with $d^4-d^2$ dimensions~\cite{BZ17}. Since the diagonal of $J_\Phi$ (up to a constant $1/d$ factor) is given by the elements of a transition matrix $T$, we will write $\diag{J_\Phi}=\frac{1}{d}\vect{T}$, where $\vect{\cdot}$ denotes the (row-wise) vectorization of a matrix,
	\begin{equation}
	\label{eq:vect}
	\vect{T} = (T \otimes \iden) \ket{\Omega},
	\end{equation}
	which can be also written as
	\begin{equation}
	\vect{T} =
	\left[
	T_{11},T_{12},\dots, T_{1d},T_{21},\dots, T_{dd}\right]^\top,
	\end{equation}
	with $\top$ denoting a transpose. We also define $\langle\bra{T}$ by the Hermitian conjugate of the right hand side of Eq.~\eqref{eq:vect}.

Before we define the coherence of a quantum channel $\Phi$, let us first physically interpret the entries of the corresponding Jamio{\l}kowski state $J_\Phi$. Writing it in the matrix form in the distinguished basis we have the following block form
\begin{equation}
\label{eq:jamiol_blocks}
\renewcommand{\arraystretch}{1.3}
J_\Phi=\frac{1}{d}\left[\begin{array}{cccc}
D^1 & C^{12} & \dots & C^{1d}\\
C^{21} & D^2 & \dots & C^{2d}\\
\vdots & \vdots & \ddots & \vdots\\
C^{d1} & C^{d2} & \dots & D^d
\end{array}\right]
\end{equation}
with
\begin{equation}
\Scale[0.88]{
\renewcommand{\arraystretch}{1.3}
\!D^i=\left[\begin{array}{cccc}
T_{i1} & c^{ii}_{12} & \dots & c^{ii}_{1d}\\
c^{ii}_{21} & T_{i2} & \dots & c^{ii}_{2d}\\
\vdots & \vdots & \ddots & \vdots\\
c^{ii}_{d1} & c^{ii}_{d2} & \dots & T_{id}
\end{array}\right]\!,~
C^{ij}=\left[\begin{array}{cccc}
c^{ij}_{11} & c^{ij}_{12} & \dots & c^{ij}_{1d}\\
c^{ij}_{21} & c^{ij}_{22} & \dots & c^{ij}_{2d}\\
\vdots & \vdots & \ddots & \vdots\\
c^{ij}_{d1} & c^{ij}_{d2} & \dots & c^{ij}_{dd}
\end{array}\right]\!,\!}
\end{equation}
where $c^{ij}_{kl}=\matrixel{i}{\Phi(\ketbra{k}{l})}{j}$ 
and formally \mbox{$T_{ij}=c^{ii}_{jj}$}. We thus see that the diagonal elements of $D^i$ describe how initial populations (occupations in the preferred basis) affect the final population of a state $i$; and the off-diagonal elements of $D^i$ describe how initial coherences affect the final population of a state $i$. On the other hand, the diagonal elements of $C^{ij}$ describe how initial populations affect the final coherence between states $i$ and~$j$; and the off-diagonal elements of $C^{ij}$ describe how initial coherences affect the final coherence between states $i$ and~$j$. 

In analogy with the standard completely decohering map, Eq.~\eqref{eq:decoh_state}, we also define a decohering operation $\D$ which acts on any quantum channels $\Phi$ by bringing its corresponding Jamio{\l}kowski state into the diagonal form,
\begin{equation}
\label{eq:decoh_chan}
\Phi\to \Phi^\D: \ J_{\Phi^\D}=J_\Phi^\D.
\end{equation}
Diagonal Jamio{\l}kowski state $J_\Phi^\D\in {\cal M}_{d^2}$ represents the classical map $T \in {\cal T}_d$ acting on probability vectors of size~$d$. The action of $\Phi^\D$ on any state $\rho$ is first to completely decohere it into $\rho^\D$, and then to transform the probability vector $\v{p}=\diag{\rho}$ into $T\v{p}$, so that the final state is always classical. Therefore, again with a slight abuse of the notation, we will write \mbox{$\Phi\in\T_d$} and $\Phi\sim T$ if $J_\Phi$ is diagonal and \mbox{$\diag{J_\Phi}=\frac{1}{d}\vect{T}$}.

As every quantum channel is isomorphic to a density matrix on an extended Hilbert space, \mbox{${\cal H}_d \otimes {\cal H}_d$}, it is then natural to apply the standard measures of mixedness and coherence to the Jamio{\l}kowski state $J_\Phi$ and to characterize in this way the properties of the associated channel. More formally, for any quantum channel $\Phi$ acting on quantum states of size $d$ one defines the entropy of a channel~\cite{RFZ11},
\begin{subequations}
    \begin{equation}
	\label{eq:map_ent}
	S(\Phi)\; := \; S(J_\Phi),
	\end{equation}
and the purity of a channel,
	\begin{equation} 
        \label{eq:map_purity}
       \gamma(\Phi)\; :=\; \gamma(J_\Phi).
    \end{equation}
\end{subequations}           
These quantities allow us to introduce 
\begin{subequations}
\begin{enumerate}
	\item  \emph{entropic coherence of a channel}, 
\begin{equation} 
\label{eq:map_coh1}
\C_{\mathrm{e}}(\Phi) \; := \; \C_{\mathrm{e}}(J_\Phi)=
S\left(\dfrac{1}{d}\vect{T}\right)-S(J_\Phi),
\end{equation}	
  and
\item {\sl $2$-norm coherence of a channel},	
	\begin{align} 
		\label{eq:map_coh2}
		\C_{2}(\Phi) := \C_{2}(J_\Phi)&=\gamma(J_\Phi)-\frac{1}{d^2}\langle\langle T\vect{T}\nonumber\\
		&= \tr{J_{\Phi}^2} - \frac{1}{d^2} \tr{TT^{\dagger}}.
	\end{align}	
\end{enumerate}
\end{subequations}
Note that $\C_{2}(\Phi)$ can be decomposed into two terms,
\begin{equation}
\C_{2}(\Phi)=\C^D_{2}(\Phi)+\C^C_{2}(\Phi),
\end{equation}
with $\C^D_{2}(\Phi)$ measuring coherence coming from diagonal blocks $D^i$ and $\C^C_{2}(\Phi)$ from off-diagonal blocks $C^{ij}$, i.e.,
\begin{equation}
\C^D_{2}(\Phi)=\sum_{\mathclap{\substack{i,k,l\\k\neq l}}} |c^{ii}_{kl}|^2,\quad \C^C_{2}(\Phi)=\sum_{\mathclap{\substack{i,j,k,l\\i\neq j}}} |c^{ij}_{kl}|^2.
\end{equation}

\begin{figure}[t]
	\includegraphics[width=\columnwidth]{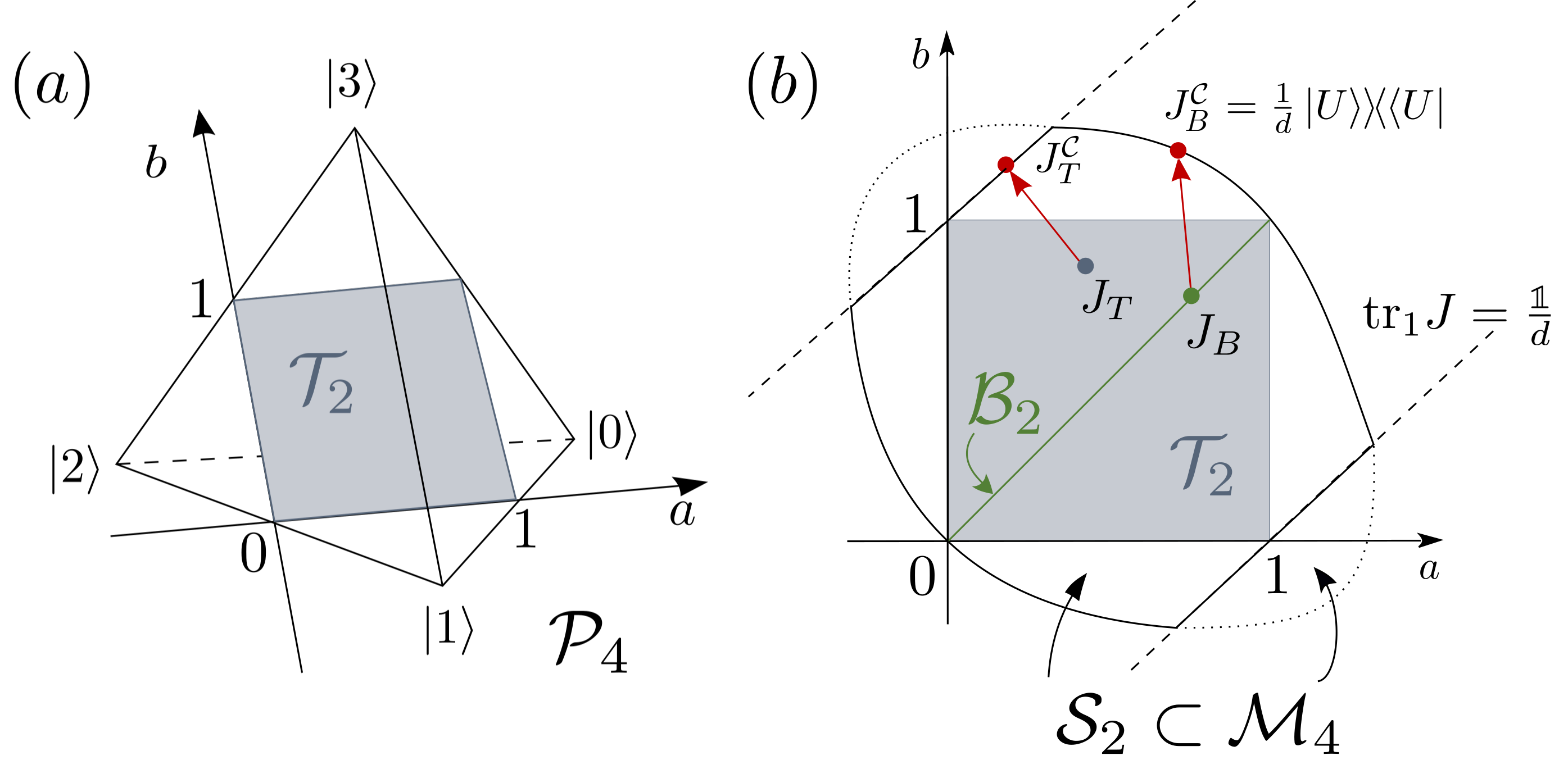}
	\caption{\label{fig:coher2} 
		\emph{Decoherence and coherification of quantum maps.}
		(a) Set ${\cal T}_2$ of stochastic matrices of size $d=2$
		embedded inside the tetrahedron $\P_4$ of
		probability vectors of length $d^2=4$;
		(b)~Optimal coherification of a quantum channel corresponding to a bistochastic matrix from $\B_2$  (and thus unistochastic) yields a unitary transformation: the Jamio{\l}kowski state $J_B$ is transformed into a pure state $J_B^\C=\frac{1}{d}|U\rangle \rangle\!\langle \langle U|$. Optimal coherification of a quantum channel corresponding to a general stochastic matrix $T \in \T_2$ yields a non-unitary channel from $\S_2$ whose Jamio{\l}kowski state $J^{\C}_T$ is mixed.}
\end{figure}

We now arrive at the central technical problem analyzed in the current work: coherifying quantum channels. Note that, given a fixed diagonal, $\diag{J_\Phi}=\frac{1}{d}\vect{T}$, of the Jamio{\l}kowski state $J_\Phi$ (equivalently: a transition matrix $T$ specifying the classical action of $\Phi$, i.e., \mbox{$\Phi^\D=T$}), one can always find the corresponding coherified pure state by simply employing the optimal coherification recipe, Eq.~\eqref{eq:coh2}. In general, however, such a pure state will not satisfy the trace preserving condition, Eq.~\eqref{eq:TP}. More precisely, this condition is equivalent to $\sum_i D^i=\iden$, and thus the choice of the off-diagonal elements of $D^i$ (describing the effect of initial coherences on final populations) is constrained beyond the standard positivity constraint. Hence, for any classical map represented by a stochastic matrix $T$ it is legitimate to ask the following question: what is the corresponding optimally coherified quantum map $\Phi^\C$ with the same classical part, such that its coherence is the largest (or its mixedness is the smallest). In Fig.~\ref{fig:coher2} we illustrate the ideas of decoherence and coherification of quantum channels using one-qubit maps as an example.

\section{Limitations of coherifying quantum channels}
\label{sec:limitiations}

In this section we will investigate the limits to which a given quantum channel $\Phi \in {\cal S}_d$, with a prescribed classical action $\Phi^{\D}=T \in {\cal T}_d$, can be coherified into an optimal channel $\Phi^{\C}$
with minimal entropy or maximal purity. To characterize potential coherification of a given channel we will simply use both coherence measures introduced above in Eqs.~\eqref{eq:map_coh1} and~\eqref{eq:map_coh2}. We thus first define optimally coherified channels according to both measures,
\begin{subequations}
	\begin{eqnarray}
	\Phi^{\C_{\mathrm{e}}}&:=&\argmax_{\Psi:~\Psi^\D=\Phi^\D}\C_{\mathrm{e}}(\Psi),\\
	\Phi^{\C_2}&:=&\argmax_{\Psi:~\Psi^\D=\Phi^\D}\C_2(\Psi),
	\end{eqnarray}
\end{subequations}
which allows us to define
\begin{subequations}
\begin{enumerate}
	\item \emph{entropic coherification}, 
		\begin{equation} 
		\label{eq:map_delta_coh1}
		\Delta\C_{\mathrm{e}}(\Phi) \; := \; \C_{\mathrm{e}}(\Phi^{\C_{\mathrm{e}}})-\C_{\mathrm{e}}(\Phi),
		\end{equation}	
	and
	\item \emph{$2$-norm coherification},	
		\begin{equation} 
		\label{eq:map_delta_coh2}
		\Delta\C_{\mathrm{2}}(\Phi) \; :=\;  \C_{2}(\Phi^{\C_{2}})-\C_{2}(\Phi).
		\end{equation}	
\end{enumerate}
\end{subequations}
We will be particularly interested in the extremal case when the coherified channel is classical, $\Phi\sim T$. Then, since \mbox{$\C_{\mathrm{e}}(T)=\C_{\mathrm{2}}(T)=0$}, we have
\begin{subequations}
	\begin{align} 
	\label{eq:map_delta_coh1_T}
	\Delta\C_{\mathrm{e}}(T) \; &= \; \C_{\mathrm{e}}(\Phi^{\C_{\mathrm{e}}})= S\left(\frac{1}{d}\vect{T}\right)- S\left(J_\Phi^{\C_{\mathrm{e}}}\right),  \\
	\label{eq:map_delta_coh2_T}
	\Delta\C_{\mathrm{2}}(T) \; &=\; \C_{2}(\Phi^{\C_{2}})=  \gamma\left(J_\Phi^{\C_2}\right) - \frac{1}{d^2}\langle\langle T\vect{T}.
	\end{align}	
\end{subequations}
Note that, although we introduced two potentially inequivalent coherification procedures, $\Phi^{\C_{\mathrm{e}}}$ and $\Phi^{\C_2}$ (with corresponding Jamio{\l}kowski states $J_\Phi^{\C_{\mathrm{e}}}$ and $J_\Phi^{\C_2}$), while deriving general bounds affecting both maximization processes, we will simply use $\Phi^\C$ (and $J_\Phi^\C$).

We now need to point out an important relation linking the classical action of a channel and its Kraus decomposition. Namely, for a channel $\Phi$ with a classical action \mbox{$\Phi^\D=T$}, the Kraus decomposition of $\Phi$, defined in Eq.~\eqref{eq:kraus_decomposition}, satisfies
\begin{equation}
\label{eq:kraus_hadamard}
\sum_i K_i\circ\bar{K}_i=T,
\end{equation}
with $\circ$ denoting the entry-wise product (also known as Hadamard or Schur product) and $\bar{K}_i$ being the complex conjugate of $K_i$.

The problem of optimal coherification of a channel naturally splits into three cases, corresponding to three families of transition matrices $T$ presented in Fig.~\ref{fig:families}. The biggest family, $\T_d$, consists of all stochastic matrices of size $d$, i.e., the most general transformations mapping the set of $d$-dimensional probability vectors into itself. These are defined by $T_{ij}\geq 0$ and $\sum_i T_{ij}=1$. The second family, $\B_d$, is given by the set of bistochastic matrices, which in addition to being stochastic satisfy $\sum_j T_{ij}=1$. This additional condition encodes the fact that bistochastic matrices map the uniform distribution, $[1/d,\dots,1/d]$, into itself. The third analyzed case corresponds to unistochastic matrices $\U_d$, which are such  bistochastic matrices $T$ whose entries can be written as $T_{ij}=|U_{ij}|^2$, for some unitary matrix $U$. Note that this condition, using Eq.~\eqref{eq:kraus_hadamard}, can alternatively be written as \mbox{$T = U \circ \bar{U}$}. While bistochastic matrices form a proper subset of stochastic matrices for all $d\geq 2$, the unistochastic matrices form a proper subset of bistochastic matrices only for $d\geq 3$, as every bistochastic matrix of order $d=2$  is unistochastic. Interestingly, the exact boundary of the set of unistochastic matrices is known only for $d\le 3$~\cite{DZ09}, while for $d\ge 3$ the set of unistochastic matrices 
is not convex~\cite{bengtsson2005birkhoff}.

\begin{figure}
	\includegraphics[width=0.7\columnwidth]{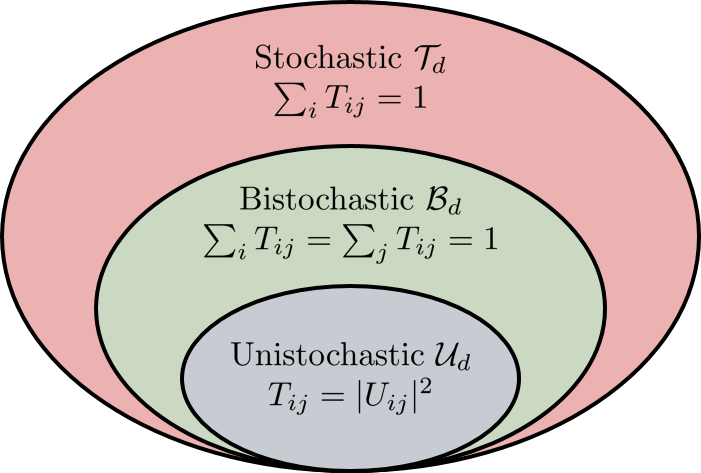}
	\caption{\label{fig:families} \emph{Families of transition matrices.} Inclusion graph of three sets of transition matrices $T$ of order $d$ describing the classical channels. For all three sets one has $T_{ij} \geq 0$. Note that $\U_2= \B_2$, while for a larger $d$ a proper inclusion relation holds, $\U_d \subset  \B_d \subset \T_d$.}
\end{figure}

\subsection{Unistochastic matrices and unitary channels}
\label{sec:unistochastic}

We start our analysis from the smallest family of unistochastic matrices. We will thus consider optimal coherification of a channel for which \mbox{$\diag{J_\Phi}=\frac{1}{d}\vect{U \circ \bar{U}}$}. We say that a given channel can be completely coherified if there exists $J^\C_\Phi$ that is pure, has the same diagonal as $J_\Phi$ and still corresponds to a valid channel. From Eq.~\eqref{eq:jamiol} it is clear that the Jamio{\l}kowski state is pure if and only if the corresponding map $\Phi$ is unitary. This simple observation can then be formalized as follows:
\begin{prop}
	\label{prop:unistoch}
	A quantum channel $\Phi$, with the corresponding Jamio{\l}kowski state $J_\Phi$, can be completely coherified to a unitary transformation if and only if its classical action is given by a unistochastic matrix.
	\end{prop}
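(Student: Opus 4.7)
The plan is to reduce the proposition to the standard fact that $J_\Phi$ is pure if and only if $\Phi$ is a unitary channel, and then to match diagonals using Eq.~\eqref{eq:diag_stoch}. One direction of that equivalence is immediate: if $\Phi(\rho)=U\rho U^\dagger$, then Eq.~\eqref{eq:jamiol} yields $J_\Phi=\frac{1}{d}\vect{U}\langle\!\langle U|$, which is rank one. Conversely, if $J_\Phi$ is pure then any Kraus decomposition~\eqref{eq:kraus_decomposition} of $\Phi$ must reduce to a single operator $K$, and the trace-preservation condition $K^\dagger K=\iden$ together with the squareness of $K$ force $K$ to be unitary.

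For the ``if'' direction, assume $T=\Phi^\D$ is unistochastic, so that $T=V\circ\bar V$ for some unitary $V$. I would set $\Phi^\C(\rho)=V\rho V^\dagger$; then $J_{\Phi^\C}$ is pure by the observation above, and its diagonal entries are $\matrixel{ij}{J_{\Phi^\C}}{ij}=\frac{1}{d}|V_{ij}|^2=\frac{1}{d}T_{ij}$, matching $\diag{J_\Phi}=\frac{1}{d}\vect{T}$. Thus $\Phi^\C$ is a complete coherification of $\Phi$. For the ``only if'' direction, suppose a complete coherification $\Phi^\C$ exists, i.e.\ $J_{\Phi^\C}$ is pure with $\diag{J_{\Phi^\C}}=\diag{J_\Phi}$. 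By the equivalence, $\Phi^\C(\rho)=U\rho U^\dagger$ for some unitary $U$, and reading off the diagonal of $J_{\Phi^\C}$ as in the previous case gives $T_{ij}=|U_{ij}|^2$, i.e.\ $T=U\circ\bar U\in\U_d$.

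The proof is essentially definitional once the purity--unitarity equivalence is in place, so the only real obstacle is that equivalence; and even there, only the converse direction (pure $J_\Phi$ implies unitary $\Phi$) needs a brief argument via the rank of the Choi state constraining the number of Kraus operators. Beyond that, the main hazard is bookkeeping: one must use index conventions consistent with Eq.~\eqref{eq:diag_stoch} so that $T_{ij}$ is correctly identified with $|U_{ij}|^2$ (rather than $|U_{ji}|^2$) when computing the diagonal of the Jamio\l{}kowski state of a unitary channel.
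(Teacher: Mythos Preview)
Your proof is correct and follows essentially the same route as the paper's own argument: both directions hinge on the equivalence ``$J_\Phi$ pure $\Leftrightarrow$ $\Phi$ unitary'' and then simply read off the diagonal of $J_{\Phi^\C}=\frac{1}{d}\vect{U}\langle\!\langle U|$ to identify $T_{ij}=|U_{ij}|^2$. If anything, you are slightly more explicit than the paper, which invokes the purity--unitarity correspondence as a known fact without the brief Kraus-rank justification you supply.
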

\begin{proof}
First assume that $\Phi$ can be completely coherified. This means that there exists a pure state $J^\C_\Phi$ with $\diag{J^\C_\Phi}=\diag{J_\Phi}$, and that it corresponds to a valid channel $\Phi^\C$. However, pure Jamio{\l}kowski states correspond to unitary channels, so that [comparing Eq.~\eqref{eq:jamiol} and Eq.~\eqref{eq:vect}]
\begin{equation}
\label{eq:jamiol_uni}
J^{\C}_\Phi = \frac{1}{d}\ket{U}\rangle  \langle \bra{U}=\frac{1}{d}(U \otimes {\bar U})^R.
\end{equation}
Therefore, the diagonal of $J^\C_\Phi$ (and, by assumption, of $J_\Phi$) is given by $\frac{1}{d}\vect{U \circ \bar{U}}$, which corresponds to a unistochastic matrix. Conversely, assume that the diagonal of $J_\Phi$ is described by a unistochastic matrix $U\circ \bar{U}$. Then one can simply choose $J^\C_\Phi$ to be a pure state given in Eq.~\eqref{eq:jamiol_uni}.
\end{proof}

Notice that every non-trivial classical stochastic dynamics is irreversible. However, if it is described by a unistochastic matrix $T$, one can find a reversible (unitary) quantum channel $\Phi^\C$ whose classical action is given by $T$.  On the other hand, if the classical dynamics is not unistochastic, it cannot be completely coherified and made reversible. We will now show such a coherification procedure in action by analysing some simple examples, and in the following section we will address the limits to which a general stochastic dynamics can be made reversible.

First, consider a transition matrix given by a permutation matrix $\Pi$ of size $d$. A quantum channel $\Phi$ corresponding to a diagonal Jamio{\l}kowski state with \mbox{$\diag{J_\Phi}=\frac{1}{d}\vect{\Pi}$} is a completely decohering channel that permutes diagonal elements. The vector $\frac{1}{d}\vect{\Pi}$ of length $d^2$ has $d$ non-zero entries equal to $1/d$, so its Shannon entropy is equal to $\log d$. However, as $\Pi$ is unistochastic, it can be coherified to a unitary transformation corresponding to the Jamio{\l}kowski state \mbox{$J_\Phi^{\C} = \frac{1}{d}(\Pi \otimes \Pi)^R$} with zero entropy and $d(d-1)$ off-diagonal entries equal to $1/d$. Thus, the entropic coherification of any classical permutation matrix reads \mbox{$\Delta\C_{\mathrm{e}}(\Pi) =\log d$}, while the $2$-norm coherification is equal to $\Delta\C_{\mathrm{2}}(\Pi)=(d-1)/d$. Observe that the optimal coherification of the classical identity matrix $T={\iden}$ (corresponding to a completely decohering channel $\D$) is indeed the unitary identity quantum channel $\I$, represented by a maximally entangled state \mbox{$J_{\I}=\frac{1}{d}\ketbra{\Omega}{\Omega}$}.

Let us now move to the other extreme: the uniform van der Waerden matrix $W$ of size $d$ with entries $W_{ij}=1/d$. A quantum channel $\Phi$ corresponding to a diagonal Jamio{\l}kowski state with \mbox{$\diag{J_\Phi}=\frac{1}{d}\vect{W}$} is the completely depolarising channel, which sends any state into the maximally mixed state, \mbox{$\Phi(\rho)={\iden}/d$}. The vector \mbox{$\frac{1}{d}\vect{W}$} has $d^2$ equal entries, so that its entropy is equal to \mbox{$2\log d$}. However, as for any dimension $d$ there exists a unitary Fourier matrix $F$ with all entries of the same modulus, the uniform bistochastic matrix is unistochastic, \mbox{$W= F \circ \bar{F}$}. Thus, $\Phi$ can be completely coherified to a unitary transformation described by a pure state $J_\Phi^\C$ of zero entropy. As a result, coherification of the uniform matrix (i.e., completely depolarizing channel) is maximal, with \mbox{$\Delta\C_{\mathrm{e}}(W) = 2\log d$} and \mbox{$\Delta\C_{\mathrm{2}}(W) = (d^2-1)/d^2$}.

Finally, we consider a class of quantum channels of an arbitrary dimension that can be completely coherified: a family of Schur product channels~\cite{li1997special,levick2017quantum}, defined as
\begin{equation}
\Phi_{X} : \rho \mapsto \rho \circ X.
\end{equation}
In the above $X$ is an arbitrary correlation matrix and $\circ$, as before, denotes the entry-wise (Schur) product. The correlation matrix $X$ has ones on the diagonal to assure trace preserving condition, and positivity of $X$ guarantees complete positivity of the map $\Phi_X$. The Choi-Jamio\l{}kowski  matrix of this channel is given by
\begin{equation}
J_{\Phi_X} =\frac{1}{d}\sum_{i,j} X_{ij}\ketbra{i}{j} \otimes \ketbra{i}{j}.
\end{equation}
As $X_{ii} =1$ for all $i$, the classical action of $\Phi_X$ is given by an identity matrix. Using Proposition~\ref{prop:unistoch} we see that every Schur product channel can be completely coherified to a single common unitary channel, namely the identity.

\subsection{Stochastic matrices and majorization bounds}
\label{sec:stoch}

We now proceed to the analysis of quantum channels whose classical action is given by a general stochastic matrix $T$ that is not bistochastic (the outer shell of the set $\T_d$ presented in Fig.~\ref{fig:families}). First, we provide the majorization upper-bound on the spectrum, $\v{\lambda}(J_\Phi)$, of all Jamio{\l}kowski states with a given diagonal \mbox{$\diag{J_\Phi}=\frac{1}{d}\vect{T}$}, i.e., on the spectrum of the optimally coherified state $J^\C_\Phi$. This bound allows us to upper-bound any Schur-convex function of the spectrum $\v{\lambda}(J^\C_\Phi)$ (like the purity $\gamma(J^\C_\Phi)$) and lower-bound any Schur-concave function (like entropy $S(J^\C_\Phi)$), and thus to bound $\C_{\mathrm{e}}$ and $\C_{2}$ of the optimally coherified channel. This, in turn, is equivalent to bounding the entropic and 2-norm coherifications of a given classical channel. Next, we provide an explicit construction of a particular (non-optimal) coherified Jamio{\l}kowski state $J^{\C_0}_\Phi$, which allows us to lower bound coherence measures for the optimally coherified channel. We then illustrate the application of our results by finding optimal coherifications of qubit and qutrit channels, and interpreting their action. Finally, we make a short comment on the coherification of a particular qudit map.

\subsubsection{Upper bound for the optimal spectrum}
\label{sec:upper_bound}

Let us first recall that a probability vector $\v{p}$ is said to majorize $\v{q}$, which we denote by $\v{p}\succ\v{q}$, if and only if
\begin{equation}
\sum_{i=1}^k p_i^\downarrow\geq \sum_{i=1}^k q_i^\downarrow,
\end{equation}
for all $k\in\{1,\dots,d\}$, where $\v{p}^\downarrow$ denotes a probability vector with entries of $\v{p}$ arranged in a non-increasing order. We now state the following theorem, which we prove in Appendix~\ref{app:majo_proof} (recall that $\v{\lambda}(X)$ denotes the eigenvalues of $X$ arranged in a non-increasing order):
\begin{thm}
	\label{thm:majo}
	Given a positive semi-definite matrix $J_\Phi$ written in blocks, as in Eq.~\eqref{eq:jamiol_blocks}, we have
	\begin{equation}
	\label{eq:majo_bound}
	\frac{1}{d}\sum_{i=1}^d\v{\lambda}(D^i)\succ \v{\lambda}(J_\Phi),
	\end{equation}
	where $d(d-1)$ zeros are appended to each of the vectors $\v{\lambda}(D^i)$, so that their dimension agrees with that of $\v{\lambda}(J_\Phi)$.
\end{thm}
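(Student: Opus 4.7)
My plan is to exploit a Gram-type factorization $J_\Phi = VV^\dagger$ and then apply Ky Fan's majorization inequality after swapping the order of the two factors. First, since $J_\Phi \geq 0$, I write $J_\Phi = VV^\dagger$ for some $d^2 \times d^2$ matrix $V$ (taking, e.g., $V = \sqrt{J_\Phi}$ and padding with zero columns if $J_\Phi$ is rank-deficient). I then partition $V$ into $d$ row blocks conformally with Eq.~\eqref{eq:jamiol_blocks}, writing $V = (V_1^\top, V_2^\top, \ldots, V_d^\top)^\top$ with each $V_i$ of size $d \times d^2$. Reading off the diagonal blocks of $J_\Phi$ in this factorization immediately yields $V_i V_i^\dagger = D^i/d$ for every $i$.

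The decisive step will be to pass to the companion matrix $V^\dagger V$, which is $d^2 \times d^2$ and shares its entire spectrum (with multiplicities) with $VV^\dagger = J_\Phi$. The row-block partition of $V$ then translates on this side into a sum decomposition
\[
V^\dagger V = \sum_{i=1}^d V_i^\dagger V_i,
\]
in which each summand $V_i^\dagger V_i \geq 0$ has non-zero spectrum matching that of $V_i V_i^\dagger = D^i/d$, together with $d^2 - d$ extra zero eigenvalues. I would then invoke Ky Fan's majorization inequality for sums of positive semi-definite matrices, $\v{\lambda}(\sum_i A_i) \prec \sum_i \v{\lambda}^\downarrow(A_i)$, with $A_i = V_i^\dagger V_i$. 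The left-hand side equals $\v{\lambda}(J_\Phi)$ and the right-hand side is exactly $\frac{1}{d}\sum_i \v{\lambda}(D^i)$ with the zero-padding prescribed in the statement, which is the claim. (Trace equality $\sum_i \mathrm{tr}(V_i^\dagger V_i) = \mathrm{tr}(V^\dagger V)$ is automatic, so weak majorization upgrades to majorization without further work.)

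The main subtlety — more a useful trick than a real obstacle — is recognizing that one should work with $V^\dagger V$ rather than with $VV^\dagger$: on the $VV^\dagger$ side the block structure of $J_\Phi$ is manifest but offers no useful sum identity, while on the $V^\dagger V$ side the row-block partition of $V$ converts into the additive decomposition $\sum_i V_i^\dagger V_i$ to which Ky Fan's bound applies directly. Notice also that the trace-preserving identity $\sum_i D^i = \iden$ is nowhere used in the argument, consistent with the theorem being stated as a bound valid for any positive semi-definite $J_\Phi$ in the block form of Eq.~\eqref{eq:jamiol_blocks}, not only for Jamio{\l}kowski states of CPTP maps.
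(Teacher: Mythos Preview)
Your argument is correct, and it takes a genuinely different route from the paper's proof. Both approaches end with Ky Fan's inequality $\v{\lambda}(\sum_i A_i)\prec\sum_i\v{\lambda}^\downarrow(A_i)$, but they differ in how they produce the summands $A_i$ with the right spectra. The paper invokes a block decomposition due to Bourin and Lee (their Lemma~\ref{lemma:decomposition}): any positive semi-definite $2\times 2$ block matrix can be written as $U\mathrm{diag}(A,0)U^\dagger+V\mathrm{diag}(0,B)V^\dagger$ for suitable unitaries, and iterating this peels off each diagonal block $D^i$ into a separate unitary conjugate. You instead factor $J_\Phi=VV^\dagger$, pass to the isospectral companion $V^\dagger V=\sum_i V_i^\dagger V_i$, and use the fact that $V_i^\dagger V_i$ and $V_iV_i^\dagger=D^i/d$ share their non-zero spectrum. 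Your route is more elementary: it avoids the Bourin--Lee lemma altogether, relying only on the standard $AB$-versus-$BA$ spectral identity and the Gram factorization. The paper's route, by contrast, stays on the ``$VV^\dagger$ side'' throughout and makes the unitary freedom in the decomposition explicit, which may be conceptually useful elsewhere but is not needed for this particular bound.
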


Next, we note that for Jamio{\l}kowski states, due to Eq.~\eqref{eq:TP}, we have $\sum_i D^i=\iden$. This results in the maximal eigenvalue of each $D^i$ to be upper-bounded by 1, as \mbox{$D^i\leq\iden$}. Consider now a $d\times d$ stochastic matrix $T$ that describes the diagonal of the Jamio{\l}kowski state $J_\Phi$. For every row of $T$ we write the sum over its columns as \mbox{$\sum_j T_{ij}=n_i+a_i$}, with $n_i$ being an integer and $a_i\in[0,1)$. We then define the following set of vectors:
\begin{equation}
\label{eq:s_vectors}
\v{s}^{(i)}(T)=\left[\phantom{\frac{i}{i}}\!\!\right.\underbrace{1,1,\dots,1}_{n_i~\mathrm{times}},a_i,0,\dots,0\left.\phantom{\frac{i}{i}}\!\!\right].
\end{equation}
Using Theorem~\ref{thm:majo} and the fact that eigenvalues of $D^i$ are upper-bounded by 1 we obtain the following majorization bound:
\begin{equation}
\label{eq:upper_bound}
\v{\mu}^\succ(T):=\frac{1}{d}\sum_{i=1}^d\v{s}^{(i)}(T)\succ \v{\lambda}(J_\Phi).
\end{equation}
Since the above majorization bound holds for all $J_\Phi$ with a fixed diagonal, in particular it also bounds the optimally coherified channel, \mbox{$\v{\mu}^\succ(T)\succ \v{\lambda}(J^\C_\Phi)$}. This can then be translated into upper-bounds on entropic coherence of $\Phi^{\C_{\mathrm{e}}}$ and 2-norm coherence of $\Phi^{\C_2}$ [so also, due to Eqs.~\eqref{eq:map_delta_coh1_T} and~\eqref{eq:map_delta_coh2_T}, on entropic and 2-norm coherifications of a classical channel $\Phi\sim T$]:
\begin{subequations}
\begin{eqnarray}
\C_{\mathrm{e}}(\Phi^{\C_{\mathrm{e}}})&\leq&S\left(\dfrac{1}{d}\vect{T}\right)-S\left(\v{\mu}^\succ(T)\right),\label{eq:C_e_bound}\\
\C_2(\Phi^{\C_2})&\leq&\v{\mu}^\succ(T)\cdot\v{\mu}^\succ(T)-\frac{1}{d^2}\tr{TT^{\dagger}}.\label{eq:C_2_bound}
\end{eqnarray}	
\end{subequations}
To illustrate the application of the bound, let us consider the following transition matrix:
\begin{equation}
\label{eq:transition_example}
T=\left[
\begin{array}{ccc}
0.7&0.2&0.6\\
0.1&0.6&0.4\\
0.2&0.2&0.0
\end{array}
\right],
\end{equation}
for which the vectors from Eq.~\eqref{eq:s_vectors} read
\begin{equation*}
\Scale[0.9]{
\v{s}^{(1)}(T)=[1,0.5,0],~\v{s}^{(2)}(T)=[1,0.1,0],~\v{s}^{(3)}(T)=[0.4,0,0].}
\end{equation*}
The bound, Eq.~\eqref{eq:upper_bound}, tells us then that \mbox{$[0.8,0.2,0,\dots,0]\succ\v{\lambda}(J^\C_\Phi)$}.

Let us also notice that the bound becomes trivial for bistochastic matrices, as in this case $\v{s}^{(i)}=[1,0,\dots,0]$ for all $i$. This, however, was to be expected, as otherwise one could differentiate between unistochastic matrices and bistochastic matrices that are not unistochastic, a problem that is known to be hard and was solved only for $d\leq 3$~\cite{bengtsson2005birkhoff}. We will come back to this problem in Sec.~\ref{sec:unital}.

\subsubsection{Lower bound for the optimal spectrum}
\label{sec:lower_bound}

We will now present a particular (non-optimal) coherification procedure $\C_0$ that can be applied to all quantum channels, irrespectively of their classical action~$T$. The coherence of channel $\Phi^{\C_0}$ coherified in such a way can thus be used as a lower-bound on the optimally coherified channel $\Phi^\C$, i.e., \mbox{$\C_{\mathrm{e}}(\Phi^{\C_{\mathrm{e}}})\geq\C_{\mathrm{e}}(\Phi^{\C_{0}})$} and \mbox{$\C_{\mathrm{2}}(\Phi^{\C_{\mathrm{2}}})\geq\C_{\mathrm{2}}(\Phi^{\C_{0}})$}.

We start by reminding that the constraint not allowing one to completely coherify a channel is the TP condition, Eq.~\eqref{eq:TP}, which means that $\sum_i D^i=\iden$. One can then choose all $D^i$ to be diagonal and try to coherify the channel only by modifying $C^{ij}$ matrices. Note that the eigenvalues of $D^i$ are then given by the entries of $T$, $\v{\lambda}(D^i)=\v{r}^{(i)}$, where the vectors $\v{r}^{(i)}$ are defined by the rows of $T$ arranged in a non-increasing order,
\begin{equation}
\v{r}^{(i)}=[T_{i1},\dots,T_{id}]^\downarrow.
\end{equation}
From Theorem~\ref{thm:majo} we thus have that
\begin{equation}
\frac{1}{d}\sum_{i=1}^d\v{r}^{(i)}\succ \v{\lambda}(J^{\C_0}_\Phi).
\end{equation}
The above majorization bound can be saturated by $J^{\C_0}_\Phi$ with the following choice of non-zero elements of $C^{ij}$. For each block $D^i$ find the maximal diagonal element, \mbox{$r^{(i)}_1$}, and set the off-diagonal elements between them (elements of $C^{ij}$) to the maximal value allowed by the CP condition, i.e., \mbox{$[r^{(i)}_1r^{(j)}_1]^{\frac{1}{2}}$}. Then, repeat the procedure for the \mbox{$n$-th} largest eigenvalues of $D^i$, $r^{(i)}_n$, with $n\in\{2,\dots,d\}$. The structure of the resulting Jamio{\l}kowski state $J^{\C_0}_\Phi$ is illustrated in Fig.~\ref{fig:lower_bound} for $d=3$. The spectrum of $J^{\C_0}_\Phi$ is given by
\begin{equation}
\label{eq:lower_bound}
\v{\mu}^\prec(T):=\frac{1}{d}\sum_{i=1}^d\v{r}^{(i)}=\v{\lambda}(J^{\C_0}_\Phi),
\end{equation}
which is also the optimal spectrum for the Jamio{\l}kowski state with a fixed diagonal, when we additionally assume no coherence in its diagonal blocks $D^i$.

\begin{figure}
	\includegraphics[width=0.6\columnwidth]{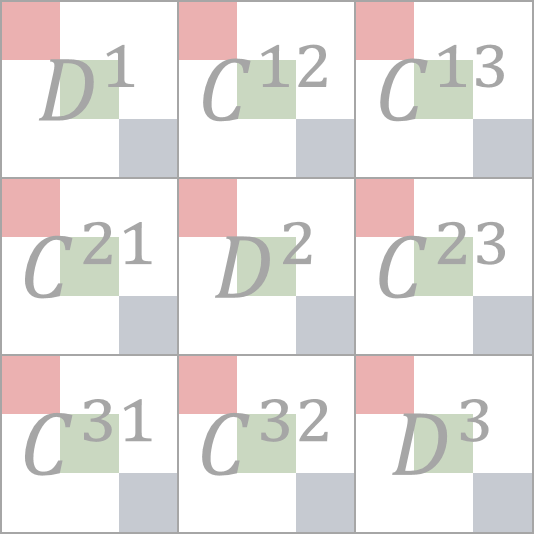}
	\caption{\label{fig:lower_bound} \emph{Structure of $J^{\C_0}_\Phi$ for $d=3$.} The non-zero entries of block matrices $D^i$ and $C^{ij}$ forming $J^{\C_0}_\Phi$ are indicated in color. Moreover, for all $i$ we have \mbox{$D^i_1\geq D^i_2\geq D^i_3$}, i.e., different colors correspond to $r^{(i)}_j$ with different $j$.}
\end{figure}

Let us now analyze the action of a channel coherified according to $\C_0$. We first note that a classical channel $\Phi=\Phi^\D$ has the following Kraus decomposition:
\begin{equation}
\Phi(\cdot)=\sum_{i,j}^d K_{ij}(\cdot)K_{ij}^\dagger,\quad K_{ij}=\sqrt{T_{ij}}\ketbra{i}{j},
\end{equation}
so that the minimal number of Kraus operators is equal to the number of non-zero entries of a stochastic matrix $T$ (in general $d^2$). On the other hand, we know that, by construction, $J^{\C_0}_\Phi$ is equal to the sum over at most $d$ projectors, so that the number of the corresponding Kraus operators will be smaller or equal to $d$. More precisely, one can obtain $i$-th Kraus operator directly from the $T$ matrix: in every row of $T$ leave only the $i$-th largest entry, replace it with its square root, and set all other entries to zero. For example, given the transition matrix from Eq.~\eqref{eq:transition_example}, we get
\begin{equation*}
K_1=\left[
\begin{array}{ccc}
\sqrt{0.7}&0&0\\
0&\sqrt{0.6}&0\\
\sqrt{0.2}&0&0
\end{array}
\right],~
K_2=\left[
\begin{array}{ccc}
0&0&\sqrt{0.6}\\
0&0&\sqrt{0.4}\\
0&\sqrt{0.2}&0
\end{array}
\right],
\end{equation*}
\begin{equation*}
K_3=\left[
\begin{array}{ccc}
	0&\sqrt{0.2}&0\\
	\sqrt{0.1}&0&0\\
	0&0&0
\end{array}
\right].
\end{equation*}
We thus see that it is always possible to coherify a channel $\Phi$, so that the number of Kraus operators (the rank of the Jamio{\l}kowski state $J_\Phi^{\C_0}$) realizing a given classical transformation $T$ (with \mbox{$\mathrm{diag}(J_\Phi^{\C_0})=\frac{1}{d}\vect{T}$}) decreases from $d^2$ to $d$. Physically, we can interpret $\C_0$ as replacing $d^2$ classical processes (transitions from state $i$ to $j$) into a classical mixture over $d$ quantum processes, where each quantum process describes a coherent superposition of $d$ classical transitions, each to a different final state. We also note that there exist stochastic matrices $T$ for which one cannot reduce the number of Kraus operators below~$d$. These are given by transitions that move all populations to a fixed $i_0$-th state, i.e., $T$ with all entries in the $i_0$-th row equal to one and all other entries being zero. The $D^i$ matrices of the corresponding Jamio{\l}kowski state are all vanishing, except for $D^{i_0}=\iden$, which cannot be coherified due to the condition \mbox{$\sum_i D^i=\iden$}.

Finally, since it is always possible to coherify a channel so that the spectrum of its Jamio{\l}kowski state is given by $\v{\mu}^\prec(T)$, one gets the following lower-bounds
\begin{subequations}
	\begin{eqnarray}
		\C_{\mathrm{e}}(\Phi^{\C_{\mathrm{e}}})&\geq&S\left(\dfrac{1}{d}\vect{T}\right)-S\left(\v{\mu}^\prec(T)\right),\\ 
		\C_{\mathrm{2}}(\Phi^{\C_{\mathrm{2}}})&\geq&\v{\mu}^\prec(T)\cdot\v{\mu}^\prec(T) - \frac{1}{d^2} \tr{TT^{\dagger}}.
	\end{eqnarray}
\end{subequations}
For comparison with the upper-bounds presented in the previous section, Eqs.~\eqref{eq:C_e_bound} and~~\eqref{eq:C_2_bound}, we note that for the exemplary transition matrix used there, Eq.~\eqref{eq:transition_example}, we have
\begin{equation*}
\v{\mu}^\prec(T)=[0.5,0.4,0.1,0,\dots,0].
\end{equation*}

\subsubsection{Qubits}
\label{sec:qubit}

Having described the general bounds on possible coherifications of quantum channels acting on arbitrary \mbox{$d$-dimensional} spaces, we now want to focus on a particular case of $d=2$. The classical action of a general qubit channel is given by a $2\times 2$ transition matrix $T$ described by two real parameters,
\begin{equation}\label{eq:def-T}
T=
\left[
\begin{array}{cc}
a&1-b\\
1-a&b
\end{array}
\right]
=:
\left[
\begin{array}{cc}
a&\tilde{b}\\
\tilde{a}&b
\end{array}
\right],
\end{equation}
with $\tilde{x}:=1-x$. We will only focus on the case when \mbox{$a\leq b$}, as the results for the case \mbox{$a>b$} are analogous. Namely, one only needs to exchange $a$ with $b$ in all expressions, and transform all matrices $X$ by replacing $X_{kl}$ with $X_{\tilde{k}\tilde{l}}$. The details of the derivations can be found in Appendix~\ref{app:qubits}. 

For $a\leq b$, Eq.~\eqref{eq:upper_bound} tells us that the spectrum of the Jamio{\l}kowski state $J_\Phi$, corresponding to any channel with classical action specified by $T$, is bounded in the following way, 
\begin{equation}
\label{eq:qubit_opt_spectrum}
\v{\mu}^\succ(T)=\frac{1}{2}[1+a+\tilde{b},b-a,0,\dots,0]\succ \v{\lambda}(J_\Phi).
\end{equation}
This bound can be in fact saturated, i.e., there exists $J^\C_\Phi$ such that $\v{\lambda}(J^\C_\Phi)=\v{\mu}^\succ(T)$. Note that, since the majorization bound is saturated, both coherification procedures (maximising entropic and 2-norm coherence) coincide, and are simply denoted by $\C$. To express the Kraus operators of the corresponding optimally coherified channel $\Phi^\C$, let us first introduce a unitary
\begin{equation}
\label{eq:qubit_unitary}
U=\frac{1}{\sqrt{a+\tilde{b}}}\left[
\begin{array}{cc}
\sqrt{a}&-\sqrt{\tilde{b}}\\
\sqrt{\tilde{b}}&\sqrt{a}
\end{array}\right],
\end{equation}
and a decaying channel \mbox{$\Psi(\cdot)=L_1(\cdot)L_1^\dagger+L_2(\cdot)L_2^\dagger$} with
\begin{equation}
\label{eq:qubit_decaying}
L_1=\left[
\begin{array}{cc}
\sqrt{a+\tilde{b}}&0\\
0&1
\end{array}\right],\quad
L_2=\left[
\begin{array}{cc}
0&0\\
\sqrt{b-a}&0
\end{array}\right].
\end{equation}
Then, \mbox{$\Phi^\C(\cdot)=K_1(\cdot)K_1^\dagger+K_2(\cdot)K_2^\dagger$} with
\begin{subequations}
\begin{align}
\label{eq:qubit_krauses1}
&K_1=L_1U^\dagger=
\left[
\begin{array}{cc}
\sqrt{a}&\sqrt{\tilde{b}}\\
-\sqrt{\frac{\tilde{b}}{a+\tilde{b}}}&\sqrt{\frac{a}{a+\tilde{b}}}
\end{array}\right],\\
\label{eq:qubit_krauses2}
&K_2=L_2U^\dagger=
\left[
\begin{array}{cc}
0&0\\
\sqrt{\tilde{a}-\frac{\tilde{b}}{a+\tilde{b}}}&\sqrt{b-\frac{a}{a+\tilde{b}}}
\end{array}\right].
\end{align}
\end{subequations}

It is straightforward to verify that the classical action of the resulting channel is given by $T$ [using Eq.~\eqref{eq:kraus_hadamard}], as well as that the spectrum of the Jamio{\l}kowski state is optimal [by checking that \mbox{$\mathrm{Tr}(K_iK_i^\dagger)=\mu_i^\succ(T)$}]. Let us also explicitly emphasize that the optimally coherified channel is given by the composition of a unitary process and a decaying channel, \mbox{$\Phi^\C(\cdot)=\Psi[U^\dagger(\cdot)U]$}. As a consequence there exists a pure state $U\ket{1}$ that is mapped by $\Phi^\C$ to a pure state $\ket{1}$, and therefore the minimum output entropy of $\Phi^\C$ is zero. We illustrate the action of $\Phi^\C$ on a Bloch sphere for a particular choice of $T$ in Fig.~\ref{fig:qubit_channel}. 

Finally, let us apply the notion of coherification to contribute to the studies on geometry of the set $\S_2$ of one-qubit stochastic maps initiated in Ref.~\cite{ruskai2002analysis}.
\begin{prop}
Coherification of any classical one-qubit stochastic map, specified by a matrix $T$ from Eq.~\eqref{eq:def-T}, yields a channel which is extremal in the set $\S_2$.
\end{prop}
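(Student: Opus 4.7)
The plan is to invoke Choi's characterization of extremal CPTP maps: a channel with a minimal Kraus decomposition $\{K_i\}_{i=1}^{r}$ is extremal in $\S_d$ if and only if the $r^2$ operators $\{K_i^\dagger K_j\}_{i,j=1}^{r}$ are linearly independent. The cleanest route is to exploit the factorization $\Phi^\C(\cdot) = \Psi[U^\dagger(\cdot)U]$ noted just after Eq.~\eqref{eq:qubit_krauses2}: the Kraus operators of $\Phi^\C$ are $K_i = L_i U^\dagger$, so $K_i^\dagger K_j = U L_i^\dagger L_j U^\dagger$. Since conjugation by the unitary $U$ preserves linear independence, extremality of $\Phi^\C$ is equivalent to extremality of the two-Kraus decaying channel $\Psi$, and the task reduces to checking Choi's condition for $\Psi$ alone.

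For the generic case $a<b$ with $a+\tilde b >0$, Eq.~\eqref{eq:qubit_decaying} yields
\begin{align*}
L_1^\dagger L_1 &= \diag{a+\tilde b,\,1}, & L_2^\dagger L_2 &= (b-a)\,\ketbra{0}{0}, \\
L_1^\dagger L_2 &= \sqrt{b-a}\,\ketbra{1}{0}, & L_2^\dagger L_1 &= \sqrt{b-a}\,\ketbra{0}{1}.
\end{align*}
In any linear relation $\alpha L_1^\dagger L_1 + \beta L_2^\dagger L_2 + \gamma L_1^\dagger L_2 + \delta L_2^\dagger L_1 = 0$, the $(2,2)$-entry forces $\alpha=0$, the two off-diagonal entries force $\gamma=\delta=0$ (using $b>a$), and the $(1,1)$-entry then gives $\beta(b-a)=0$, hence $\beta=0$. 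Thus $\{L_i^\dagger L_j\}$ is linearly independent, $\Psi$ is extremal, and so is $\Phi^\C$.

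The remaining cases are immediate. When $a=b$, $T$ is bistochastic and, since $\B_2=\U_2$, Proposition~\ref{prop:unistoch} implies $\Phi^\C$ is unitary and hence extremal; the case $a>b$ follows by the $a\leftrightarrow b$ symmetry stated before Eq.~\eqref{eq:qubit_opt_spectrum}; and the degenerate corner $(a,b)=(0,1)$, where $U$ in Eq.~\eqref{eq:qubit_unitary} is undefined but $\Phi^\C$ collapses to the constant channel $\rho\mapsto\ketbra{1}{1}$ with Kraus operators $\ketbra{1}{0}$ and $\ketbra{1}{1}$, is verified directly since the four products $K_i^\dagger K_j$ are then the four matrix units. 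The main point requiring care is this bookkeeping of boundary cases where the decomposition $\Psi\circ\mathcal{U}$ degenerates; once that is handled, Choi's criterion reduces the whole proposition to the short entry-wise argument above.
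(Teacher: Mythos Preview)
Your proof is correct and follows essentially the same route as the paper: both reduce via $K_i=L_iU^\dagger$ to checking Choi's criterion for the decaying channel $\Psi$, compute the four products $L_i^\dagger L_j$, and observe their linear independence directly. Your treatment is in fact slightly more thorough, as you explicitly handle the degenerate corner $(a,b)=(0,1)$ where $a+\tilde b=0$ and the unitary $U$ of Eq.~\eqref{eq:qubit_unitary} is undefined; the paper's proof tacitly assumes this does not occur.
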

\begin{proof}
In the bistochastic case $a=b$, we obtain a unitary channel, which is extremal. In other cases, without loss of generality, we may assume that \mbox{$c=b-a>0$}. The products of the Kraus operators from Eq.~\eqref{eq:qubit_decaying}, corresponding to a decaying channel, read 
\begin{equation}\label{eqn:krauss-overlaps}
\begin{array}{ll}
L_1^\dagger L_1 =\begin{bmatrix}
1-c  & 0 \\
0 & 1 
\end{bmatrix},
&  
L_1^\dagger L_2 =\begin{bmatrix}
0 & 0 \\
\sqrt{c} & 0 
\end{bmatrix},
\\
& \\
L_2^\dagger L_1 =\begin{bmatrix}
0 & \sqrt{c} \\
0 & 0 
\end{bmatrix},
& 
L_2^\dagger L_2 =\begin{bmatrix}
c & 0 \\
0 & 0 \\
\end{bmatrix}.
\end{array}
\end{equation}
Now, by direct inspection, we see that the above matrices form a linearly independent set. Thus, invoking the theorem of Choi~\cite{choi1975completely}, the channel described by two Kraus operators $L_1$ and $L_2$ is extremal. To see that $\Phi^{\mathcal{C}}$ is extremal, we note that the additional unitary matrix applied to Kraus operators will not introduce a linear dependence of the set defined in Eq.~\eqref{eqn:krauss-overlaps}.
\end{proof}

\begin{figure}
	\centering
	\begin{tikzpicture}[xscale=1,yscale=1]
	\node[] at (0,0) {
		\includegraphics[width=0.75\columnwidth]{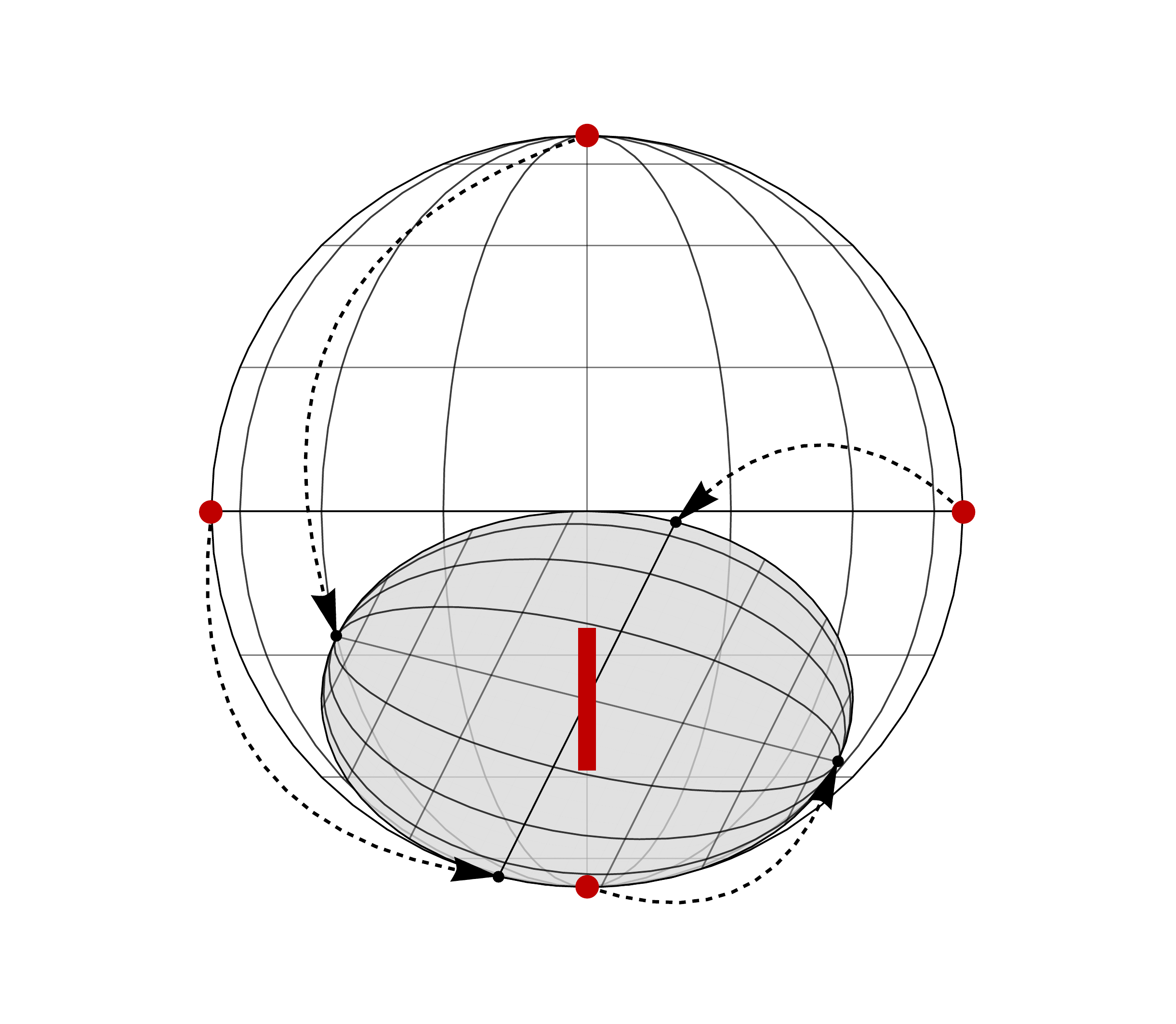}
	};
	\node[] at (0,2.4) {$\ketbra{0}{0}$};
	\node[] at (0,-2.4) {$\ketbra{1}{1}$};
	\node[] at (2.65,0) {$\ketbra{+}{+}$};
	\node[] at (-2.65,0) {$\ketbra{-}{-}$};			
	\end{tikzpicture}
	\caption{\label{fig:qubit_channel} \emph{Action of the optimally coherified qubit channel.} The image of the Bloch sphere under $\Phi^\C$ (with classical action described by $a=\frac{1}{3}$ and $b=\frac{5}{6}$) is represented by the grey ellipsoid. The thick red line represents the action of the classical channel $\Phi^\D$, while dashed lines show transformations of significant points of the sphere.}
\end{figure}

\subsubsection{Qutrits}
\label{sec:qutrit}

We will now illustrate how our results can be applied beyond the simplest qubit scenario, by using them to find optimally coherified qutrit channels $\Phi^{\C}$. Again, the coherification procedure $\C$ will optimize both considered coherence measures simultaneously. The classical action of a general qutrit channel is given by a $3\times 3$ transition matrix $T$ described by six real parameters. Here, we will consider three families of such matrices, each parametrized by three real numbers:
\begin{equation}
\label{eq:qutrit_families}
T\in\left\{
\left[\begin{array}{ccc}
0 & a & b\\
c & 0 & \tilde{b}\\
\tilde{c} & \tilde{a} & 0
\end{array}
\right],
\left[
\begin{array}{ccc}
a & b & 0\\
0 & 0 & c\\
\tilde{a} & \tilde{b} & \tilde{c}
\end{array}
\right],
\left[
\begin{array}{ccc}
a & b & c\\
\tilde{a} & \tilde{b} & \tilde{c}\\
0 & 0 & 0
\end{array}
\right]\right\},
\end{equation}\normalsize
with $\tilde{x}:=1-x$. We will refer to the above as cyclic matrices, single-row matrices and double-row matrices, respectively. For each family of $T$ we will provide the optimal spectrum of the Jamio{\l}kowski state [which yields tight bounds on $\C_{\mathrm{e}}$ and $\C_2$ via Eqs.~\eqref{eq:C_e_bound}-\eqref{eq:C_2_bound}], as well as the Kraus decomposition of the optimally coherified channel $\Phi^\C$. The details of the derivations can be found in Appendix~\ref{app:qutrits}.

For cyclic matrices $\Phi^{\C}$ is given by $\Phi^{\C_0}$, i.e., the optimal coherification procedure is given by the one we defined in Sec.~\ref{sec:lower_bound}. Introducing
\begin{equation}
\mu=\max(a,b)+\max(c,\tilde{b})+\max(\tilde{c},\tilde{b}).
\end{equation}
we thus get
\begin{equation}
\label{eq:optimal_two}
\v{\lambda}(J^\C_\Phi)=\left[\frac{\mu}{3},1-\frac{\mu}{3},0,\dots,0\right],
\end{equation}
The Kraus operators can be obtained by using the procedure described in Sec.~\ref{sec:lower_bound}, e.g., for $a\geq b$, $\tilde{b}\geq c$ and $\tilde{c}\geq\tilde{a}$ we get

\begin{equation}
K_1=\left[
\begin{array}{ccc}
0&\sqrt{a}&0\\
0&0&\sqrt{\tilde{b}}\\
\sqrt{\tilde{c}}&0&0
\end{array}
\right],~
K_2=\left[
\begin{array}{ccc}
0&0&\sqrt{b}\\
\sqrt{c}&0&0\\
0&\sqrt{\tilde{a}}&0
\end{array}
\right].
\end{equation}

For single-row transition matrices there are three separate cases depending on parameters $a$, $b$ and $c$. If $a+b\leq 1$ then the optimal spectrum is given by
\begin{eqnarray}
\v{\lambda}(J^\C_\Phi)&=&\frac{1}{3}\left[1+a+b+c,\max(\tilde{a}-b,\tilde{c}),\right.\nonumber\\
&&\phantom{\frac{1}{3}}\left.\min(\tilde{a}-b,\tilde{c}),0,\dots,0 \right],
\end{eqnarray}\normalsize
and the Kraus decomposition of the corresponding optimally coherified channel is given by
\begin{equation*}
K_1=\left[
\begin{array}{ccc}
\sqrt{a}&\sqrt{b}&0\\
0&0&\sqrt{c}\\
\sqrt{\frac{b}{a+b}}&-\sqrt{\frac{a}{a+b}}&0
\end{array}
\right],~
K_2=\left[
\begin{array}{ccc}
0&0&0\\
0&0&0\\
0&0&\sqrt{\tilde{c}}
\end{array}
\right],
\end{equation*}
\begin{equation}
K_3=\left[
\begin{array}{ccc}
0&0&0\\
0&0&0\\
\sqrt{\tilde{a}-\frac{b}{a+b}}&\sqrt{\tilde{b}-\frac{a}{a+b}}&0
\end{array}
\right].
\end{equation}
If $a+b\geq 1$ the number of non-zero elements of the optimal spectrum (so also of the Kraus operators) reduces from three to two. The optimal spectrum is again given by Eq.~\eqref{eq:optimal_two}, but this time with
\begin{equation}
\mu=1+\max(\tilde{a}+\tilde{b},\tilde{c})+c.
\end{equation}
If $\tilde{c}\geq\tilde{a}+\tilde{b}$ we thus get $\v{\lambda}(J^\C_\Phi)=[2/3,1/3,\dots,0]$, i.e., the optimal spectrum is constant for all parameters satisfying $a+b\geq 1+c$, and the Kraus operators of the optimal map are given by:
\begin{equation}
	\label{eq:kraus_single_row}
	\begin{array}{lll}
		K_1&=&\left[
		\begin{array}{ccc}
			\sqrt{\frac{\tilde{b}}{\tilde{a}+\tilde{b}}}&\sqrt{\frac{\tilde{a}}{\tilde{a}+\tilde{b}}}&0\\
			0&0&\sqrt{c}\\
			0&0&\sqrt{\tilde{c}}
	\end{array}
	\right],\\
	K_2&=&\left[
	\begin{array}{ccc}
		\sqrt{a-\frac{\tilde{b}}{\tilde{a}+\tilde{b}}}&-\sqrt{b-\frac{\tilde{a}}{\tilde{a}+\tilde{b}}}&0\\
		0&0&0\\
		\sqrt{\tilde{a}}&-\sqrt{\tilde{b}}&0
	\end{array}
	\right].
\end{array}
\end{equation}
On the other hand, if $\tilde{c}\leq\tilde{a}+\tilde{b}$ then there is a slight change in the Kraus decomposition of the optimal map. Namely, the last rows of $K_1$ and $K_2$ in Eq.~\eqref{eq:kraus_single_row} are swapped.

For double-row matrices there are again three separate cases depending on the value of \mbox{$s:=a+b+c$}. These are specified by \mbox{$s\in[0,1]$}, \mbox{$s\in(1,2)$} and \mbox{$s\in[2,3]$}, and the optimal spectra are then given by
\begin{subequations}
	\begin{eqnarray}
	\v{\lambda}(J^\C_\Phi)&=&\frac{1}{3}[1+s,1,1-s,0,\dots,0],\\
	\v{\lambda}(J^\C_\Phi)&=&\frac{1}{3}[2,1,0,\dots,0],\\
	\v{\lambda}(J^\C_\Phi)&=&\frac{1}{3}[4-s,1,s-2,0,\dots,0],
	\end{eqnarray}
\end{subequations}
respectively. Due to the lack of concise expressions, we provide Kraus decompositions of the resulting optimally coherified channels in Appendix~\ref{app:qutrits}.

\subsubsection{Qudits}
\label{sec:qudits}

Finally, we want to make a short comment about a special family of channels in the general $d$-dimensional case. Consider a \emph{completely contracting channel} $\Psi_{\sigma}$, which sends any initial state into a single point, $\Psi_{\sigma}(\rho)=\sigma$. The corresponding Jamio{\l}kowski state has a product structure and reads $J_{\Psi_\sigma}= \sigma\otimes\iden/d$~\cite{Zy08}. The output state can be coherified to a pure state $\sigma^\C=\ketbra{\psi}{\psi}$ by the standard procedure given in Eq.~\eqref{eq:coh2}. Hence the  contracting channel $\Psi_{\sigma}$, can be coherified to a channel contracting into a pure state with \mbox{$J_{\Psi_{\ket{\psi}}}= \ketbra{\psi}{\psi}\otimes\iden/d$} and zero output entropy. Note that this coherification procedure increases the entropic coherence of a channel by $S(\sigma)$. Notice also, that for a mixed state $\sigma$ such a procedure is not optimal, as can be immediately seen by recalling the result presented in Sec.~\ref{sec:unistochastic}, where we showed that $\Psi_{\iden/d}$ can be completely coherified.

\subsection{Bistochastic matrices and polygon constraints}
\label{sec:unital}

We now proceed to the analysis of quantum channels whose classical action is described by bistochastic matrices that are not unistochastic (the middle shell of the graph presented in Fig.~\ref{fig:families}). On the one hand, due to Proposition~\ref{prop:unistoch}, we know that these cannot be completely coherified. On the other, our majorization result derived in Sec.~\ref{sec:upper_bound} yields a trivial bound for bistochastic matrices. Moreover, a non-trivial constraint for all bistochastic matrices could serve as a witness of unistochasticity, and thus it is unlikely that such a concise bound can be found~\cite{bengtsson2005birkhoff}. Therefore, here we will present an approach that allows one to obtain limitations on possible coherifications of quantum channels with classical action described by a particular subset of $\B_d$.

We start by noting that due to the TP condition, Eq.~\eqref{eq:TP}, for every $k\neq l$ we have \mbox{$\sum_i D^i_{kl}=0$} (see Fig.~\ref{fig:bistochastic_constraint}a). This, via the polygon inequality, implies that
\begin{equation}
|D^i_{kl}|\leq \sum_{j\neq i} |D^j_{kl}|.
\end{equation}
Recalling that matrices $D^i$ are all positive, we have
\begin{equation}
|D^i_{kl}|\leq \sqrt{D^i_{kk}D^i_{ll}}=\sqrt{T_{ik}T_{il}}.
\end{equation}
Combining the above two equations we arrive at
\begin{equation}
\label{eq:TP_bistochastic}
|D^i_{kl}|\leq \sum_{j\neq i} \sqrt{T_{jk}T_{jl}}.
\end{equation}
We thus see that the maximum value of $|D^i_{kl}|$ allowed by CP condition is \mbox{$\sqrt{T_{ik}T_{il}}$}, whilst the TP condition restricts it via Eq.~\eqref{eq:TP_bistochastic}. Therefore, if for some $i,k,l$ we have
\begin{equation}
\label{eq:chain_relation}
\sqrt{T_{ik}T_{il}}> \sum_{j\neq i} \sqrt{T_{jk}T_{jl}},
\end{equation}
then $|D^i_{kl}|$ is constrained beyond the positivity condition and we know that the resulting Jamio{\l}kowski state cannot be pure, so that the corresponding channel cannot be completely coherified. More precisely, for every $i,k,l$ such that \mbox{$T_{ik}T_{il}\neq 0$}, we introduce
\begin{equation}
\sqrt{\alpha^i_{kl}}:=\min\left(\frac{\sum_{j\neq i} \sqrt{T_{jk}T_{jl}}}{\sqrt{T_{ik}T_{il}}},1\right),
\end{equation}
which describes the maximum fraction of the coherence (between states $k$ and $l$ of a matrix $D^i$) that could be achieved if there was no TP constraint.

\begin{figure}
	\includegraphics[width=\columnwidth]{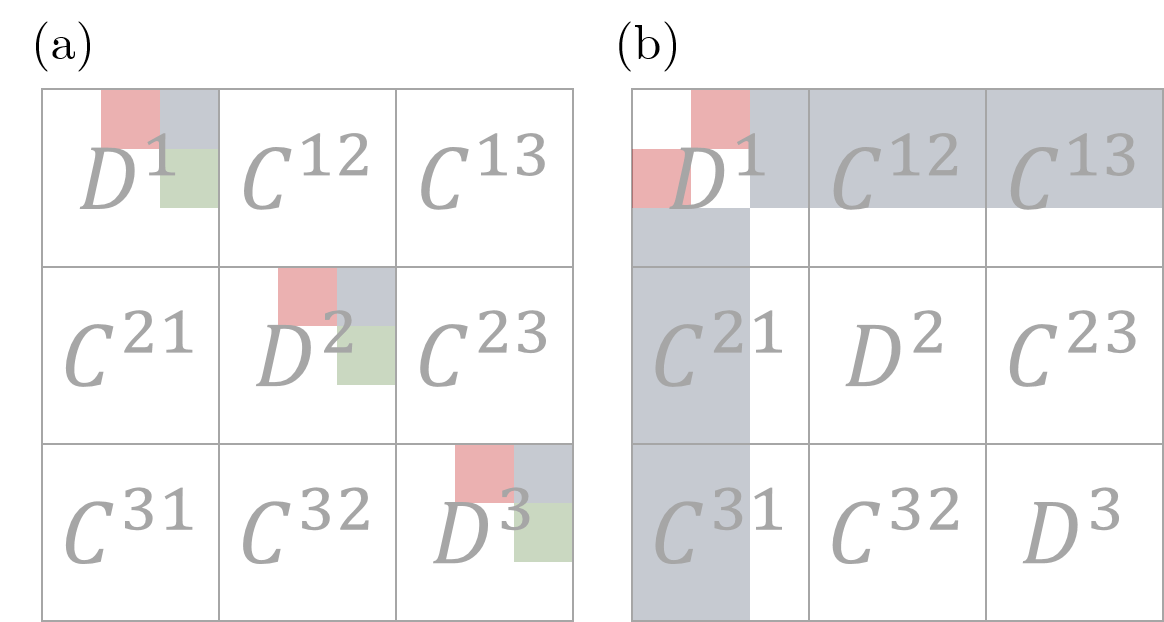}
	\caption{\label{fig:bistochastic_constraint} \emph{Constraints for channels with bistochastic classical action.} (a) Due to the TP condition, the sum of a given off-diagonal element over all $D^i$ matrices must vanish (here the summed elements are presented in the same color). (b)~If the TP condition constrains a given off-diagonal element (here: two red elements of $D^1$) beyond the positivity constraint, then also the values of all off-diagonal elements sharing a row or column index with it (here: all elements denoted in blue) are constrained beyond the positivity constraint.}
\end{figure}

Now, whenever $\alpha^i_{kl}<1$ (i.e., $|D^i_{kl}|$ must be smaller than necessary for complete coherification), other off-diagonal elements of the Jamio{\l}kowski state $J_\Phi$ also become constrained beyond the positivity condition. Before we prove this and explain how it restricts the coherification of quantum channels, let us first comment on the $\alpha^i_{kl}<1$ condition. First of all, there exist $T\in\B_d$ that are neither unistochastic, nor they satisfy this condition for any $i,k,l$. Thus, the presented bounds will, in general, work only for a subset of quantum channels with bistochastic classical action. However, as for $d=3$ a bistochastic matrix is either unistochastic or $\alpha^i_{kl}<1$ for some $i,k,l$~\cite{bengtsson2005birkhoff}, we will obtain non-trivial bounds for all qutrit channels. Further improvements would require finding a clearer separation between the sets $\U_d$ and $\B_d$.

We start by showing how $\alpha^i_{kl}<1$ can be used to constrain the purity of the optimally coherified channel. Note that Sylvester's criterion states that $J_\Phi\geq 0$ implies that all $3\times 3$ submatrices of $J_\Phi$ must have positive determinant. In particular, it means that for a part of a matrix $d\cdot J_\Phi$ containing $T_{ik}$, $T_{il}$ and any other $T_{jm}$ we have
\begin{equation}
\label{eq:Sylvesters}
\det\left[
\begin{array}{ccc}
T_{ik}&D^i_{kl}&a\\
D^{i*}_{kl}&T_{il}&b\\
a^*&b^*&T_{jm}
\end{array}
\right]\geq 0.
\end{equation}
Since we know that the maximum value of $|D^i_{kl}|^2$ is upper-bounded by \mbox{$\alpha^{i}_{kl}T_{ik}T_{il}$}, the above equation constrains all off-diagonal elements $a,b$ of $J_\Phi$ sharing a row or column with $T_{ik}$ or $T_{il}$ (see Fig.~\ref{fig:bistochastic_constraint}b). This results in the following bound on the purity of the optimally coherified Jamio{\l}kowski state $J^{\C_2}_\Phi$ (see Appendix~\ref{app:polygon} for details):
\begin{equation}
\label{eq:purity_bound}
\gamma(J^{\C_2}_\Phi)\leq 1-\Delta_1-\Delta_2,
\end{equation}
with
\begin{subequations}
\begin{eqnarray}
\label{eq:purity_deficit1}
\Delta_1&=&\frac{2}{d^2}T_{ik}T_{il}(1-\alpha^i_{kl}),\\
\label{eq:purity_deficit2}
\Delta_2&=&\frac{d-T_{ik}-T_{il}}{d^2}\left(T_{ik}+T_{il}-\beta^i_{kl}\right),
\end{eqnarray}
\end{subequations}
and
\begin{equation}
\label{eq:beta}
\beta^i_{kl}:=\sqrt{(T_{ik}-T_{il})^2+4\alpha^i_{kl} T_{ik}T_{il}}.
\end{equation}
The purity deficits, $\Delta_1$ and $\Delta_2$, add up for every $i,k,l$ for which Eq.~\eqref{eq:chain_relation} holds (however, care needs to be taken not to count twice the same terms). We illustrate this bound on the purity of the optimally coherified channel in Fig.~\ref{fig:purity_bound} for an exemplary case of a family of qutrit maps. 

\begin{figure}
	\includegraphics[width=0.75\columnwidth]{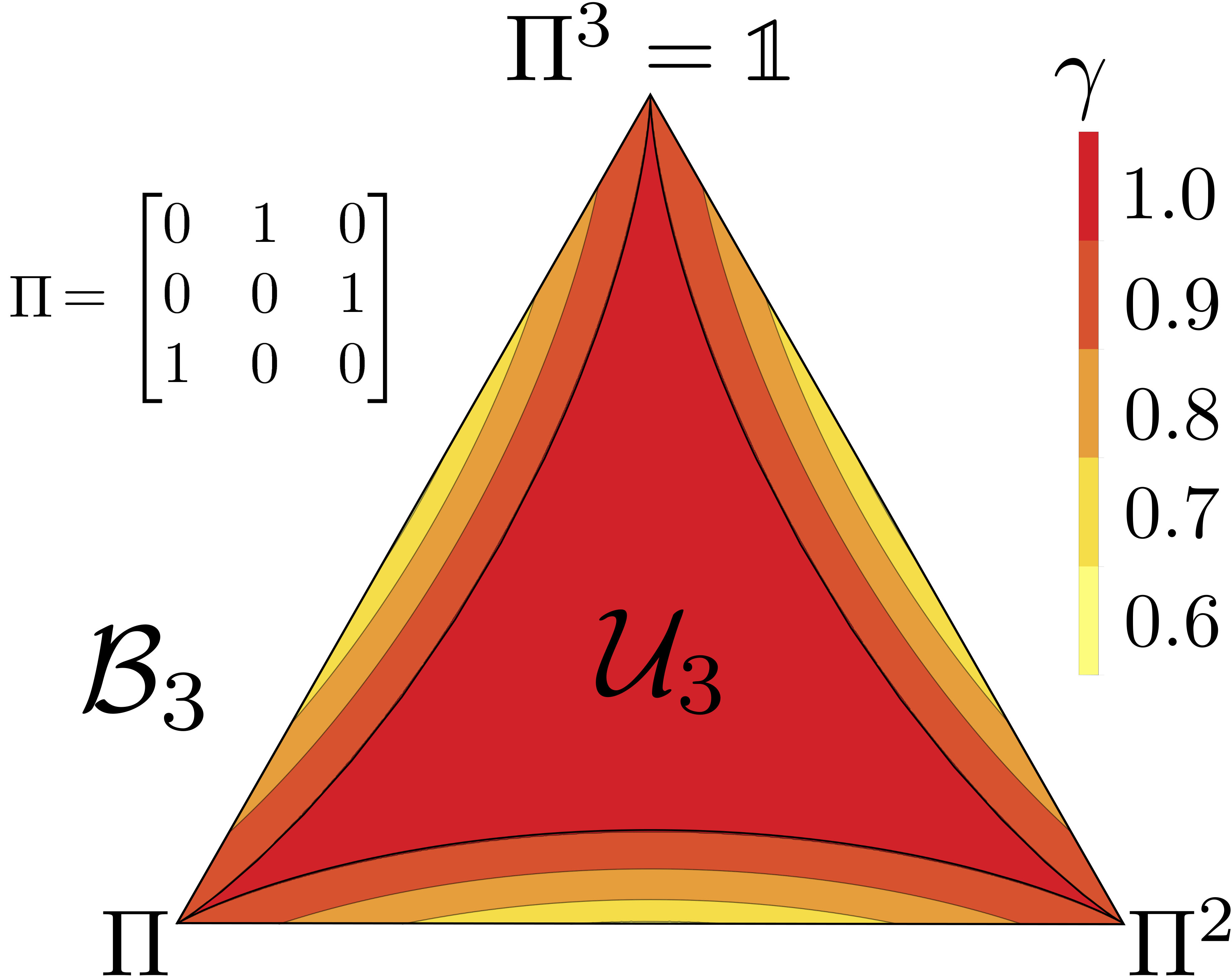}
	\caption{\label{fig:purity_bound} \emph{Purity bound.} The upper-bound on purity $\gamma$ of the family of optimally coherified qutrit channels with the classical action given by a bistochastic transition matrix \mbox{$T=\sum_{i=1}^{3}q_i\Pi^i$}, with $\sum_i q_i=1$ and $\Pi$ being a cyclic permutation matrix. Any unistochastic matrix $T \in \U_3$ can be completely coherified, so that $\gamma = 1$.
}
\end{figure}

Alternatively, one can use the fact that off-diagonal terms of $D^i$ are constrained beyond the positivity condition to bound $\v{\lambda}(D^i)$, and then use Theorem~\ref{thm:majo} to obtain a non-trivial majorization bound on the eigenvalues of the Jamio{\l}kowski state. In Appendix~\ref{app:polygon} we show for example that
\begin{equation}
\label{eq:majo_bistoch}
[1-\mu^i,\mu^i,0,\dots,0]\succ \v{\lambda}(D^i),
\end{equation}
with
\begin{equation}
\mu^i=\frac{1}{2}(T_{ik_i}+T_{il_i}-\beta^i_{k_il_i}),
\end{equation}
where $k_i$ and $l_i$ are indices for which $\alpha^i_{k_il_i}$ is minimized (so that we obtain non-trivial majorization bounds on the spectra of $D^i$, whenever $\alpha^i_{kl}<1$ for some $k$ and $l$). This, in turn, means that
\begin{equation}
\label{eq:majo_bistoch2}
\left[1-\frac{1}{d}\sum_i \mu^i,\frac{1}{d}\sum_i \mu^i,0,\dots,0\right]\succ \v{\lambda}(J^{\C}_\Phi),
\end{equation}
which can be used to obtain bounds on $\C_{\mathrm{e}}$ and $\C_2$ via Eqs.~\eqref{eq:C_e_bound}-\eqref{eq:C_2_bound}. As an example consider a quantum channel with classical action given by
\begin{equation}
\label{eq:qutrit_bistoch_ex}
T=\frac{1}{2}\left[
\begin{array}{ccc}
0&1&1\\
1&0&1\\
1&1&0
\end{array}\right],
\end{equation}
corresponding to the middle-point between $\Pi$ and $\Pi^2$ in Fig.~\ref{fig:purity_bound}. One then gets
\begin{equation}
\alpha^1_{23}=\alpha^2_{13}=\alpha^3_{12}=0,
\end{equation}
resulting in $\mu^i=1/2$, so that the spectrum of the optimally coherified Jamio{\l}kowski state is majorized by \mbox{$[1/2,1/2,0,\dots,0]$}.

Finally, let us note that, in particular cases, the tools introduced in Sec.~\ref{sec:stoch} can also be used to find limitations on coherifying channels with $T\in\B_d$. As an example consider the family of qutrit channels described by cyclic matrices [first entry of Eq.~\eqref{eq:qutrit_families}] with \mbox{$a=\tilde{b}=\tilde{c}$}. The matrix $T$ is then bistochastic and the spectrum of the optimally coherified Jamio{\l}kowski state is given by \mbox{$[a,1-a,0,\dots,0]$} (for $a\geq 1/2$) or \mbox{$[\tilde{a},1-\tilde{a},0,\dots,0]$} (for $a< 1/2$). This shows that the majorization bound, Eq.~\eqref{eq:majo_bistoch2}, applied to the channel with $T$ specified by Eq.~\eqref{eq:qutrit_bistoch_ex} is tight.

\section{Physical interpretation}
\label{sec:relevance}

We started this paper asking about the extent to which a given random transformation can be explained via the underlying deterministic and coherent process. Now, being equipped with formal bounds limiting possible coherifications of quantum channels, we will try to address this initial question. We will also provide interpretation of the purity of a channel by relating it to the notions of \emph{unitarity} and average output purity. Finally, we will comment on the links and differences between our approach to the study of coherence of quantum channels, and the ones existing in the literature.

Let us start by recalling that the evolution of a pure quantum state $\ket{\psi}$ under the action of a channel $\Phi$, described by
\begin{equation}
\Phi(\ketbra{\psi}{\psi})=\sum_i K_i\ketbra{\psi}{\psi} K_i^\dagger,
\end{equation}
can be interpreted as an incoherent (probabilistic) mixture of different pure state transformations,
\begin{equation}
\ket{\psi}\xrightarrow{\Phi}\frac{1}{\sqrt{q_i}}K_i\ket{\psi}\mathrm{~with~probability~}q_i,
\end{equation}
where
\begin{equation}
q_i=\tr{K_i\ketbra{\psi}{\psi} K_i^\dagger},
\end{equation}
and we refer to the canonical Kraus form in which all Kraus operators $K_i$ are mutually orthogonal, as they are obtained by reshaping eigenvectors of the Choi-Jamio{\l}kowski matrix. Each independent path, described by $K_i$ and being chosen with probability $q_i$, describes a coherent evolution, as it preserves the ability of a state to interfere (it maps a pure state to a pure state). Thus, the probability distribution over different paths, $\v{q}$, can be seen as describing the incoherent randomness associated with $\Phi$. Note, however, that the probability of evolving along a given path depends on the initial state of the system $\ket{\psi}$. In order to achieve a state-independent statement characterizing a given quantum channel $\Phi$, one can then focus on the average probability of choosing a given path. Introducing the average over (Haar distributed) pure states,
\begin{equation}
\langle\cdot\rangle_\psi=\int d\psi(\cdot),
\end{equation}
we see that the probability describing which path is chosen is on average\footnote{We note that the average here could actually be taken over all states, pure and mixed. However, in order to be consistent with the averaging process used in the definition of unitarity, we restrict the average to pure states only.}
\begin{equation}
\Scale[0.92]{
\langle q_i\rangle_\psi=\tr{K_i\langle\ketbra{\psi}{\psi}\rangle_\psi K_i^\dagger}=\tr{K_iK_i^\dagger}=\lambda_i(J_\Phi),}
\end{equation}
where $\v{\lambda}(J_\Phi)$ denotes, as usual, the eigenvalues of the Jamio{\l}kowski state $J_\Phi$ corresponding to $\Phi$.

We thus see that the incoherent randomness of the evolution coming from the random choice of different paths is (on average) described by the spectrum of $J_\Phi$. The extent to which a quantum channel with a given classical action $T$ can be coherified tells us how coherent the underlying evolution, leading to transitions described by $T$, can be. On one extreme, we have unistochastic transitions that can be completely coherified and, therefore, explained by a single deterministic path (unitary dynamics). On the other hand, the majorization upper-bounds on $\v{\lambda(J_\Phi)}$ that we derived in Sec.~\ref{sec:limitiations}, yield lower-bounds on the randomness of path distribution of the underlying process necessary to induce classical transformation $T$. Moreover, our majorization lower-bound provides a particular coherent explanation of every classical process~$T$ (decreasing the randomness of path distribution) and, in particular, shows that all transitions $T$ can be explained with at most $d$ paths.

Let us now focus on a particular measure describing the randomness of the path distribution, namely on the purity of a channel $\gamma(\Phi)$. One could be tempted to think that the bigger $\gamma(\Phi)$ is, the purer the average output purity,
\begin{equation}
\langle\gamma_\Phi\rangle:=\langle\gamma(\Phi(\ketbra{\psi}{\psi}))\rangle_\psi,
\end{equation}
will be. Although the two notions are related, as we will shortly see, they are not in direct correspondence. As an illustrative example consider two quantum channels,
\begin{subequations}
\begin{eqnarray}
\!\!\!\!\!\!\!\!\!\Phi_1(\cdot)&=&U(\cdot)U^\dagger\mathrm{~~with~~}U=\frac{1}{\sqrt{d}}\sum_{k,l}e^{\frac{2\pi ikl}{d}}\ketbra{k}{l},\\
\!\!\!\!\!\!\!\!\!\Phi_2(\cdot)&=&\ketbra{\psi_+}{\psi_+}\mathrm{~~with~~}\ket{\psi_+}=\frac{1}{\sqrt{d}}\sum_i\ket{i},
\end{eqnarray}
\end{subequations}
with classical action given by the van der Waerden matrix $T=W$ with flat entries \mbox{$W_{ij}=1/d$}, which maps every probability distribution to a uniform one. Both channels have the same average output purity equal to one; but for a reversible $\Phi_1$ we have \mbox{$\gamma(\Phi_1)=1$}, while for an irreversible $\Phi_2$ we get \mbox{$\gamma(\Phi_1)=1/d$}. This suggests that the purity of a channel is somehow related to reversibility of the process, which leads us to the concept of unitarity. It was originally introduced in Ref.~\cite{wallman2015estimating} to measure the departure of a channel from the unitary dynamics, and for trace-preserving channels is defined by:
\begin{equation}
u(\Phi):=\frac{d}{d-1}\left[\langle\gamma_\Phi\rangle-\gamma(\Phi(\iden/d))\right].
\end{equation}
We note in the passing that one can also relate unitarity to the variance of the random variable \mbox{$X=\Phi(\ketbra{\psi}{\psi})$}:
\begin{equation}
u(\Phi)=\tr{\langle X^2\rangle_\psi-\langle X\rangle_\psi^2}=\tr{\mathrm{Var}_\psi(X)}.
\end{equation}
For our exemplary channels we see that \mbox{$u(\Phi_1)=1$} and \mbox{$u(\Phi_2)=0$}, in accordance with the purity of the channel and capturing the fact that a completely irreversible process is as far as possible from a unitary transformation.

We will now formally relate $\gamma(\Phi)$, $\langle\gamma_\Phi\rangle$ and $u(\Phi)$. The authors of Ref.~\cite{wallman2015estimating} showed that
\begin{equation}
u(\Phi)=\frac{d}{d^2-1}\left[d\gamma(\Phi)-\gamma(\Phi(\iden/d))\right],
\end{equation}
which, using the definition of unitarity, directly leads to
the general expression for the average output purity 
derived by Cappellini~\cite{cappellini2007unpublished},
\begin{equation}
\langle\gamma_\Phi\rangle=\frac{d}{d+1}\left[\gamma(\Phi)+\gamma(\Phi(\iden/d))\right].
\end{equation}
We thus see that both average output purity and unitarity are proportional to the purity of a channel corrected by a term describing the purity of the transformed maximally mixed state. Moreover, for large $d$, $u(\Phi)$ actually approaches $\gamma(\Phi)$. Now, by noting that the minimal value of purity for a $d$-dimensional system is $1/d$, we obtain the following inequalities:\footnote{Note that by restricting to unital channels these inequalities actually become equalities.}
\begin{subequations}
\begin{align}
u(\Phi)\leq\frac{d^2}{d^2-1}&\left[\gamma(\Phi)-\frac{1}{d^2}\right],\label{eq:unitarity_bound}\\
\langle\gamma_\Phi\rangle\geq\frac{d}{d+1}&\left[\gamma(\Phi)+\frac{1}{d}\right]\label{eq:output_purity_bound}.
\end{align}
\end{subequations}
Using our majorization and purity upper-bounds we can thus upper-bound the optimal unitarity of a channel with a given classical action $T$. On the other hand, using the majorization lower-bound, we can lower-bound the average output purity of such an optimal channel. The above bounds can actually be tightened by noting that
\begin{equation}
\label{eq:max_mixed_purity}
\gamma(\Phi(\iden/d))\geq \gamma(\D(\Phi(\iden/d)))=\sum_i\left(\frac{1}{d}\sum_j T_{ij}\right)^2.
\end{equation}
Let us also mention that the optimally coherified qubit channel $\Phi^\C$, with Kraus operators specified by Eqs.~\eqref{eq:qubit_krauses1}~and~\eqref{eq:qubit_krauses2}, not only maximizes purity, but also minimizes the output purity for the maximally mixed state [as it saturates the bound in Eq.~\eqref{eq:max_mixed_purity}]. Therefore, it maximizes unitarity among all qubit channels with the same classical action.

Finally, we would like to relate the work presented here to studies on \emph{cohering power} $P$~\cite{mani2015cohering} and \emph{coherence generating power} $\tilde{P}$~\cite{zanardi2017coherence,zanardi2017measures} of quantum channels. These notions were introduced within the framework of resource theory of coherence~\cite{baumgratz2014quantifying} and measure the ability of a channel to transform initially incoherent state to a coherent one. More formally they are defined by 
\begin{subequations}
	\begin{eqnarray}
		P(\Phi)&=&\max_{\rho\in \P}\C_x(\Phi(\rho)),\\
		\tilde{P}(\Phi)&=&\langle \C_2(\Phi(\rho))\rangle_{\P},		
	\end{eqnarray}
\end{subequations}
where $\P$ denotes the set of incoherent states ($\rho\in\P$ if and only if $\matrixel{i}{\rho}{j}=0$ for $i\neq j$), $\langle\cdot\rangle_{\P}$ denotes the average over all incoherent states, and $\C_x$ is any measure of coherence for states, e.g., the relative entropy of coherence~$\C_{\mathrm{e}}$. Since both definitions involve only the action of $\Phi$ on incoherent states, we see that the only relevant parameters (defining the values of $P$ and $\tilde{P}$) are given by the diagonals of $D^i$ and $C^{ij}$, which are not constrained by the TP condition. Hence, given a fixed classical action $T$ of $\Phi$ (so fixed diagonals of $D^i$), we can choose the diagonals of $C^{ij}$ to be maximal possible (constrained only by complete positivity condition, $J_\Phi\geq 0$):
\begin{equation}
C^{ij}_{kk}=\sqrt{T_{ik}T_{jk}},
\end{equation}
and set all other matrix elements of $J_\Phi$ to zero. This way we will obtain a channel that maximizes both $P$ and $\tilde{P}$ among all channels with a fixed classical action $T$. The action of such an optimal map is given by
\begin{equation}
\Phi(\cdot)=\Phi(\D(\cdot))=\sum_j \matrixel{j}{\cdot}{j} \ketbra{\psi_j}{\psi_j},
\end{equation}
with
\begin{equation}
\ket{\psi_j}=\sum_{i} \sqrt{T_{ij}}\ket{i}.
\end{equation}
We note that the above channels that maximize $P$ and $\tilde{P}$ for a fixed $T$ do not coincide with optimally coherified channels studied in this work. The reason for this is that the latter optimization depends on all coherence terms, whereas the former one only on the ones lying on the diagonal of $C^{ij}$. This emphasizes the main difference between our channel-oriented approach (when one focuses on the properties of the channel itself, specifically how close it is to a unitary evolution) and the state-oriented approach used in the studies of cohering power and coherence generating power (when one focuses on the properties of the output states for a restricted set of input states).

\section{Conclusions}
\label{sec:conclusions}

Any classical state of size $d$, represented by a diagonal density matrix $\rho=\rho^\D$, can be coherified to a pure state $\rho^\C$ with maximal coherence, which is transformed back into $\rho$ by decoherence (see Fig.~\ref{fig:coher1}). In a similar way, one can try to coherify a quantum operation $\Phi$ represented by the corresponding Jamio{\l}kowski state $J_{\Phi}$. However, due to the trace preserving condition, the problem of coherifying a quantum channel has a much richer structure.

In this work we posed a general question: how to coherify a given classical map, represented by a stochastic transition matrix $T$, in an optimal way? Physically, this can be understood as looking for the most coherent (deterministic) underlying quantum evolution that can explain the observed random transformation $T$. Mathematically, among all quantum channels that decohere to $T$ we looked for the one whose Jamio{\l}kowski state has maximal coherence (as measured by entropic and 2-norm coherence). We demonstrated that the complete coherification to a (reversible) unitary channel is possible if and only if $T$ is unistochastic, as schematically visualized in Fig.~\ref{fig:coher2}. To capture the limitations of possible coherifications of non-unistochastic maps we derived explicit bounds for the purity and entropy of the optimally coherified channel. Furthermore, we provided an explicit coherification procedure that allows one to lower-bound the coherence of the optimal channel, and solved the optimal coherification problem for several classes of channels, including all one-qubit channels.

Studying possible coherifications of quantum channels can also shed some light on the structure and geometry of the set ${\cal S}_d$ of quantum operations~\cite{ruskai2002analysis}. For $d=2$ the set of pure quantum states (the Bloch sphere) can be obtained by coherifying the set ${\cal P}_2$ of one-bit classical states. Analogously, the square ${\cal T}_2$ of classical stochastic matrices forms a skeleton of the larger set ${\cal S}_2$ of one-qubit quantum operations. Any unistochastic matrix \mbox{$B\in\B_2=\U_2$} can be coherified into a quantum unitary transformation, corresponding to a pure Jamio{\l}kowski state $J^{\C}_B$. Furthermore, we have demonstrated that any classical transition matrix $T\in {\cal T}_2$ can be coherified to an optimal quantum channel, corresponding to a mixed state $J^{\C}_T$, that is an extremal point of ${\cal S}_2$. One would then like to check under what conditions a similar statement holds for higher dimensions, i.e., when the optimally coherified channels are extremal and have vanishing minimum output entropy.

Besides this problem concerning the geometry of $\S_d$, there are also other open questions that we would like to conclude this paper with. One could ask whether the optimally coherified channels are unique up to a unitary equivalence, i.e., can one find two channels whose Jamio{\l}kowski states are not connected via unitary, and which maximize a given coherence measure among all Jamio{\l}kowski states with a fixed diagonal? Furthermore, the expressions that lower- and upper-bound possible coherifications can definitely be improved, especially for bistochastic matrices. In this special case, exploring the boundary between unistochastic and bistochastic maps could be beneficial. Moreover, one might pursue a statistical approach and ask a question concerning a possible degree of coherification of a random stochastic matrix, or a generic quantum channel~\cite{BCSZ09}. Last but not least, it would be very interesting to establish a closer connection between coherification approach to quantum channels, pursued in this work and based on the coherence of the corresponding Jamio{\l}kowski states, with the earlier notion of the coherence power, related to the increase of coherence of selected quantum states by the action of a channel~\cite{mani2015cohering,GDEP16,zanardi2017measures}.

\textit{Note:} Shortly after our work appeared on arXiv, another preprint studying the coherence of quantum channels was posted there~\cite{datta2017coherence}.

\section*{Acknowledgements}

We would like to thank Valerio Cappellini for sharing with us his unpublished notes
and Antony Milne for his invaluable linguistic advices. We acknowledge financial support from the ARC via the Centre of Excellence in Engineered Quantum Systems, project number CE110001013 (K.K.) and Polish National Science Centre under the project numbers 2016/22/E/ST6/00062 (Z.P.) and DEC-2015/18/A/ST2/00274 (K.{\.Z}.).

\appendix

\section{Proof of Theorem~\ref{thm:majo}}
\label{app:majo_proof}

We will make use of the following known results (see Lemma~3.4 of Ref.~\cite{bourin2012unitary} and Eq.~(2.5) of Ref.~\cite{ando1994majorizations}):
\begin{lem}
	\label{lemma:decomposition}
	For every positive semi-definite matrix written in blocks we have the following decomposition
	\begin{equation}
	\left[
	\begin{array}{cc}
	A&X\\
	X^\dagger&B
	\end{array}
	\right]=
	U
	\left[
	\begin{array}{cc}
	A&0\\
	0&0
	\end{array}
	\right]U^\dagger+
	V\left[
	\begin{array}{cc}
	0&0\\
	0&B
	\end{array}
	\right]V^\dagger,
	\end{equation}
	for some unitary operators $U$ and $V$.
\end{lem}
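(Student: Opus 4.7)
The plan is to construct the unitaries $U$ and $V$ explicitly, by combining a square-root factorisation of $M$ with a Halmos-type dilation of the contraction that Douglas's lemma supplies. First I would write $M=LL^\dagger$ (for instance $L=M^{1/2}$) and split $L$ into row blocks $L=\begin{pmatrix}L_1\\ L_2\end{pmatrix}$ matching the block structure of $M$, so that $A=L_1L_1^\dagger$, $B=L_2L_2^\dagger$, and $X=L_1L_2^\dagger$. The padded rectangular matrices $\hat L_1=\begin{pmatrix}L_1\\ 0\end{pmatrix}$ and $\hat L_2=\begin{pmatrix}0\\ L_2\end{pmatrix}$ then realise the two target summands, namely $\hat L_i\hat L_i^\dagger=M_i$ with $M_1=\begin{pmatrix}A&0\\ 0&0\end{pmatrix}$ and $M_2=\begin{pmatrix}0&0\\ 0&B\end{pmatrix}$.

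With this setup, finding $U,V$ is equivalent to producing modified factors $\mathcal L_1=U\hat L_1$ and $\mathcal L_2=V\hat L_2$ with $\mathcal L_1\mathcal L_1^\dagger+\mathcal L_2\mathcal L_2^\dagger=M$, or equivalently isometries $E_i$ embedding the $i$-th block into the ambient space with $E_1AE_1^\dagger+E_2BE_2^\dagger=M$; the required unitaries are then obtained by completing these isometries. To build them I would invoke Douglas's lemma: the positivity of $M$ forces $X=A^{1/2}KB^{1/2}$ for some contraction $K$ with $\|K\|\le 1$. Using the defect operators $D_K=(I-K^\dagger K)^{1/2}$ and $D_{K^\dagger}=(I-KK^\dagger)^{1/2}$, the Halmos dilation $W=\begin{pmatrix}K & D_{K^\dagger}\\ -D_K & K^\dagger\end{pmatrix}$ is unitary on the ambient space; combining $W$ with suitable completions of the polar factors of $L_1,L_2$ produces $U$ and $V$, and a direct block-by-block computation verifies that $UM_1U^\dagger+VM_2V^\dagger$ has $A$, $B$, and $A^{1/2}KB^{1/2}=X$ in the respective blocks.

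The principal obstacle is the bookkeeping of partial isometries: the polar factors of $L_1,L_2$ have initial spaces equal only to $\mathrm{ran}(A^{1/2})$ and $\mathrm{ran}(B^{1/2})$, and their extensions to genuine unitaries must interlock with the Halmos dilation so that all three block equations (for the $A$-, $B$-, and $X$-blocks) hold simultaneously. A cleaner alternative, closer to Bourin's original argument, is to induct on $\mathrm{rank}(M)$: one identifies an eigendirection of $M$ that, after a rotation acting within one of the two block factors, can be arranged to be supported entirely in the top or bottom block; the resulting rank-one PSD summand is absorbed into the $UM_1U^\dagger$ or $VM_2V^\dagger$ term and peeled off, leaving a lower-rank positive block matrix to which the inductive hypothesis applies. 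Either route delivers the unitaries $U,V$ in the statement.
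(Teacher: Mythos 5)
First, note that the paper does not actually prove this lemma: it is quoted as a known result, with a pointer to Lemma~3.4 of Bourin's work and Eq.~(2.5) of Ando's survey, so there is no in-paper argument to compare against. Your opening setup is exactly the standard (Bourin--Lee) route: factor $M=LL^\dagger$ with $L=M^{1/2}$, split into row blocks so that $A=L_1L_1^\dagger$, $B=L_2L_2^\dagger$, $X=L_1L_2^\dagger$, and observe that the padded square matrices $\hat L_1=\bigl(\begin{smallmatrix}L_1\\0\end{smallmatrix}\bigr)$, $\hat L_2=\bigl(\begin{smallmatrix}0\\L_2\end{smallmatrix}\bigr)$ satisfy $\hat L_i\hat L_i^\dagger=M_i$. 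The problem is that neither of your two proposed completions is actually carried out: the Douglas--Halmos route ends with an admitted unresolved ``bookkeeping'' obstacle and an unverified claim that the three block equations close, and the rank-induction sketch rests on an unjustified assertion that some eigendirection of $M$ can be rotated into one block factor. As written, the proof does not close.

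The irony is that your own setup already finishes the proof in one line, with no contraction $K$, no defect operators, and no dilation. Since $L=M^{1/2}$ is Hermitian, $M=LL^\dagger=L^\dagger L=L_1^\dagger L_1+L_2^\dagger L_2=\hat L_1^\dagger\hat L_1+\hat L_2^\dagger\hat L_2$. Each $\hat L_i$ is a \emph{square} matrix, so its polar decomposition $\hat L_i=W_i|\hat L_i|$ can always be taken with $W_i$ a genuine unitary (the partial isometry extends freely on the kernel --- the ``initial space'' worry you raise is a non-issue in finite dimensions for square factors), whence $\hat L_i\hat L_i^\dagger=W_i\,\hat L_i^\dagger\hat L_i\,W_i^\dagger$, i.e.\ $\hat L_i^\dagger\hat L_i=W_i^\dagger M_iW_i$. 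Substituting gives
\begin{equation*}
M=W_1^\dagger M_1 W_1+W_2^\dagger M_2 W_2,
\end{equation*}
so $U=W_1^\dagger$ and $V=W_2^\dagger$ do the job; the three block equations need no separate verification because the sum is identically $L^\dagger L=M$. This is precisely the argument in the cited reference. If you replace your Douglas/Halmos paragraph with this observation, the proof is complete and correct.
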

\begin{lem}
	\label{lemma:KyFan}
	For $\v{\lambda}(A)$ denoting the vector of eigenvalues of $A$ arranged in a decreasing order we have
	\begin{equation}	
	\v{\lambda}(A)+\v{\lambda}(B)\succ \v{\lambda}(A+B).
	\end{equation}	
\end{lem}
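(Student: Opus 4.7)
The plan is to establish Ky Fan's majorization inequality via the variational (Ky Fan maximum) principle for partial sums of eigenvalues. Recall that for any Hermitian matrix $M$ of size $d$ and any $k\in\{1,\dots,d\}$ one has the identity
\begin{equation}
\sum_{i=1}^k \lambda_i(M) = \max_{P} \tr{PM},
\end{equation}
where the maximum is over all rank-$k$ orthogonal projectors $P$ and is attained by the projector onto the span of the top-$k$ eigenvectors of $M$. I would take this as a known tool (or prove it separately in a one-liner by diagonalising $M$ and noting that $\tr{PM}$ is a convex combination of eigenvalues of $M$ by weights in $[0,1]$ summing to $k$).

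The first step is to apply the variational principle to $A+B$ and split the trace by linearity:
\begin{equation}
\sum_{i=1}^k \lambda_i(A+B) = \max_P\left[\tr{PA}+\tr{PB}\right].
\end{equation}
The second step is to upper-bound the maximum of a sum by the sum of maxima, and apply the variational principle a second time to each summand:
\begin{equation}
\max_P\left[\tr{PA}+\tr{PB}\right] \leq \max_P \tr{PA} + \max_P \tr{PB} = \sum_{i=1}^k \lambda_i(A)+\sum_{i=1}^k \lambda_i(B).
\end{equation}
This yields the partial-sum inequalities required for every $k\in\{1,\dots,d\}$.

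The third and final step is to check that the two vectors have equal total sums, as required by the definition of majorization. This is immediate from linearity of the trace: $\tr{A+B}=\tr{A}+\tr{B}$, which forces equality in the bound above at $k=d$. One also has to verify that the partial sums on the left genuinely correspond to the ``top-$k$'' sums in the decreasing rearrangement of $\v{\lambda}(A)+\v{\lambda}(B)$; but since $\v{\lambda}(A)$ and $\v{\lambda}(B)$ are each already in non-increasing order by convention, their entrywise sum is non-increasing, so the first $k$ components are exactly the $k$ largest.

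Since this is a classical result, there is no substantive obstacle; the only subtlety is keeping track of the ordering convention so that the partial sums appearing on both sides of the inequality are exactly those entering the definition of majorization, and verifying that the inequality $\max(f+g)\le\max f+\max g$ is applied to the same family of projectors on both sides.
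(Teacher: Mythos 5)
Your proof is correct. Note, however, that the paper does not actually prove this lemma: it is stated as a known result, with the reader referred to Eq.~(2.5) of Ando's survey on majorization (and Lemma~3.4 of Bourin--Lee for the companion decomposition lemma). Your argument via the Ky Fan maximum principle, $\sum_{i=1}^k\lambda_i(M)=\max_P\tr{PM}$ over rank-$k$ projectors, is the standard textbook proof of exactly that cited fact, and all the pieces are in place: the variational identity itself (your one-liner justification via the weights $\langle e_i|P|e_i\rangle\in[0,1]$ summing to $k$ is fine), the subadditivity of the maximum, the equality of total sums from linearity of the trace, and --- importantly, since this is the step most often glossed over --- the observation that the entrywise sum of two non-increasing vectors is non-increasing, so the partial sums you bound really are the ordered partial sums required by the definition of majorization. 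So your proposal supplies a complete, self-contained proof where the paper supplies only a citation; there is nothing to reconcile between the two approaches.
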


We are now ready to present the proof of Theorem~\ref{thm:majo}.
\begin{proof}
	First, using Lemma~\ref{lemma:decomposition} iteratively one gets:
	\begin{equation}
	\renewcommand{\arraystretch}{1.3}
	\Scale[0.85]{
	\begin{array}{lll}
		\!\!\!\!d\cdot J_\Phi&=&U_1\left[\begin{array}{cccc}
			D^1 & 0 & \dots & 0\\
			0 & 0 & \dots & 0\\
			\vdots & \vdots & \ddots & \vdots\\
			0 & 0 & \dots & 0
		\end{array}\right]U_1^\dagger+
		U_2\left[\begin{array}{cccc}
			0 & 0 & \dots & 0\\
			0 & D^2 & \dots & 0\\
			\vdots & \vdots & \ddots & \vdots\\
			0 & 0 & \dots & 0
		\end{array}\right]U_2^\dagger\!\!\!\!\!\!\!\!\!\!\!\\
		&&+\dots+
		U_d\left[\begin{array}{cccc}
			0 & 0 & \dots & 0\\
			0 & 0 & \dots & 0\\
			\vdots & \vdots & \ddots & \vdots\\
			0 & 0 & \dots & D^d
		\end{array}\right]U_d^\dagger.
	\end{array}}
	\end{equation}
	Then, using unitary invariance of the spectrum, the fact that $\v{\lambda}_1\succ\v{\lambda'}_1$ and $\v{\lambda}_2\succ\v{\lambda'}_2$ induces $\v{\lambda}_1+\v{\lambda}_2\succ\v{\lambda'}_1+\v{\lambda'}_2$, and iteratively applying Lemma~\ref{lemma:KyFan} one arrives at Eq.~\eqref{eq:majo_bound}.
\end{proof}

\section{Coherifying qubit channels}
\label{app:qubits}

Before we proceed to deriving the results presented in the main text, let us first recall the following fact. Given two channels, $\Phi_1$ and $\Phi_2$, whose Jamio{\l}kowski states, $J_{\Phi_1}$ and $J_{\Phi_2}$, are connected via a local unitary acting on the second subsystem, \mbox{$J_{\Phi_2}=(\iden\otimes \bar{U})J_{\Phi_1}(\iden\otimes \bar{U})^\dagger$}, their Kraus decomposition satisfies
\begin{equation}
\label{eq:kraus_connection}
\Phi_1(\cdot)=\sum_i K_i(\cdot)K_i^\dagger,\quad \Phi_2(\cdot)=\sum_i K_i U^\dagger (\cdot) U K_i^\dagger.
\end{equation}

Now, using the block structure of the Jamio{\l}kowski state, Eq.~\eqref{eq:jamiol_blocks}, and taking into account the TP condition, \mbox{$\sum_i D^i=\iden$}, for a general qubit channel, we get:
\begin{equation}
J_\Phi=\frac{1}{2}\left[
\begin{array}{cc}
D^1&C^{12}\\
C^{21}&\iden-D^1
\end{array}\right], \quad
D^1=\left[
\begin{array}{cc}
a&c\\
c^*&\tilde{b}
\end{array}\right].
\end{equation}
Consider a unitary $U$ diagonalizing $D^1$, i.e., \mbox{$U D^1 U^\dagger=\diag{\lambda_1,\lambda_2}$}. Now, the same unitary will obviously also diagonalize $\iden-D^1$. Therefore, a $4\times 4$ unitary $V=\iden\otimes U$ diagonalizes the $D^i$ blocks of the Jamio{\l}kowski state $J_\Phi$:
\begin{equation}
V J_\Phi V^\dagger=\frac{1}{2}\left[
\begin{array}{cccc}
\lambda_1&0&&\\
0&\lambda_2&&\\
&&\tilde{\lambda_1}&0\\
&&0&\tilde{\lambda_2}
\end{array}
\right],
\end{equation}
where blank spaces mean arbitrary entries, \mbox{$0\leq \lambda_i\leq 1$} and \mbox{$\lambda_1+\lambda_2=a+\tilde{b}$}.

For $a\leq b$ we may obtain the optimally coherified state $J^\C_\Phi$ [with the spectrum saturating the bound given by $\v{\mu}^\succ(T)$ from Eq.~\eqref{eq:qubit_opt_spectrum}] in the following way. We choose $\lambda_1=a+\tilde{b}$ (resulting in $\lambda_2=0$), and set the off-diagonal element between $\lambda_1$ and $\tilde{\lambda}_2=1$ to the maximal value allowed by positivity, i.e., \mbox{$\sqrt{a+\tilde{b}}$}. As a result, $V J^\C_\Phi V^\dagger$ becomes a projector on two orthogonal pure states, which in turn means that the corresponding map is given by the decaying Kraus operators $L_1,L_2$ from Eq.~\eqref{eq:qubit_decaying}. Since $V=\iden\otimes U$, we can use Eq.~\eqref{eq:kraus_connection}, to find the Kraus of decomposition of $\Phi^\C$ given by $L_iU^\top$. Finally, $U$ is defined by \mbox{$UD^1U^\dagger=\mathrm{diag}(a+\tilde{b},0)$}, which is exactly the unitary given in Eq.~\eqref{eq:qubit_unitary} (note that, since $U$ is real, we have $U^\dagger=U^\top$)

Similarly, for $a\geq b$ we may choose $\lambda_1=1$ (resulting in $\lambda_2=a-b$), and set the off-diagonal element between $\lambda_1$ and $\tilde{\lambda}_2=\tilde{a}+b$ to the maximal value allowed by positivity, i.e., \mbox{$\sqrt{\tilde{a}+b}$}. As described in the main text, this then leads to the same results as in $a\leq b$ case, just with $a$ and $b$ exchanged, as well as with all $2\times 2$ matrices $X$ transformed by replacing $X_{kl}$ with $X_{\tilde{k}\tilde{l}}$.

\section{Coherifyng qutrit channels}
\label{app:qutrits}

\subsection{Cyclic matrices}

The general form of $D^i$ matrices is as follows:
\begin{equation*}
\Scale[0.92]{
D^1=
\left[
\begin{array}{ccc}
0 & 0 & 0\\
0 & a & x\\
0 & x^* & b
\end{array}
\right],~
D^2=
\left[
\begin{array}{ccc}
c & 0 & y\\
0 & 0 & 0\\
y^* & 0 & \tilde{b}
\end{array}
\right],~
D^3=\left[
\begin{array}{ccc}
\tilde{c} & z & 0\\
z^* & \tilde{a} & 0\\
0 & 0 & 0
\end{array}
\right].}
\end{equation*}
Clearly, in order to satisfy the TP condition, \mbox{$\sum_i D^i=\iden$}, we need \mbox{$x=y=z=0$}. Hence, the Jamio{\l}kowski state $J_\Phi$ can be recast in the following form (note that columns and rows number 1, 5 and 9, composed only of zeros, have been removed):
\begin{equation}
J_\Phi=\frac{1}{3}
\left[
\begin{array}{cccccc}
a&0&x_1&x_2&x_3&x_4\\
0&b&y_1&y_2&y_3&y_4\\
x_1^*&y_1^*&c&0&x_5&x_6\\
x_2^*&y_2^*&0&\tilde{b}&y_5&y_6\\
x_3^*&y_3^*&x_5^*&y_5^*&\tilde{c}&0\\
x_4^*&y_4^*&x_6^*&y_6^*&0&\tilde{a}
\end{array}
\right].
\end{equation}
Now, using Theorem~\ref{thm:majo}, we get:
\begin{equation}
\v{\mu}_{\mathrm{opt}}:=\frac{1}{3}[\mu,3-\mu,0,\dots,0]\succ\v{\lambda}(J_\Phi).
\end{equation}
where
\begin{equation}
\mu=\max(a,b)+\max(c,\tilde{b})+\max(\tilde{c},\tilde{b}).
\end{equation}
Moreover, one can construct optimally coherified matrix $J^\C_\Phi$ such that $\v{\lambda}(J^\C_\Phi)=\v{\mu}_{\mathrm{opt}}$. To do this one simply needs to group together the maximal/minimal terms of each \mbox{$2\times 2$} matrix and set the corresponding off-diagonal terms to the maximal values allowed by the positivity constraint. For example, if $a\geq b$, $\tilde{b}\geq c$ and $\tilde{c}\geq\tilde{a}$, one chooses 
\begin{eqnarray*}
x_2=\sqrt{a\tilde{b}},\quad&x_3=\sqrt{a\tilde{c}},\quad&y_5=\sqrt{\tilde{b}\tilde{c}},\\
y_1=\sqrt{bc},\quad&y_4=\sqrt{\tilde{a}b},\quad&x_6=\sqrt{\tilde{a}c},
\end{eqnarray*}
and sets the rest of off-diagonal terms to zero. Note that this is exactly the construction introduced in Sec.~\ref{sec:lower_bound} and illustrated in Fig.~\ref{fig:lower_bound}.

\subsection{Single-row matrices}

The general form of $D^i$ matrices is as follows:
\begin{equation*}
\Scale[0.92]{
D^1=
\left[
\begin{array}{ccc}
a & x & 0\\
x^* & b & 0\\
0 & 0 & 0
\end{array}
\right],~
D^2=
\left[
\begin{array}{ccc}
0 & 0 & 0\\
0 & 0 & 0\\
0 & 0 & c
\end{array}
\right],~
D^3=\left[
\begin{array}{ccc}
\tilde{a} & x' & y\\
x'^* & \tilde{b} & z\\
y^* & z^* & \tilde{c}
\end{array}
\right].}
\end{equation*}
Clearly, in order to satisfy the TP condition we need $x'=-x$ and $y=z=0$. We now note that a $3\times 3$ unitary matrix $U$ diagonalizing $D^1$,
\begin{equation}
\label{eq:diagonalising_U_single}
U=\left[
\begin{array}{ccc}
U_{11} & U_{12} & 0\\
U_{21} & U_{22} & 0\\
0 & 0 & 1
\end{array}
\right],\quad
UD^1U^\dagger=\left[
\begin{array}{ccc}
\lambda_1 & 0 & 0\\
0 & \lambda_2 & 0\\
0 & 0 & 0
\end{array}
\right],
\end{equation}
also diagonalizes $D^2$ (by keeping it unchanged) and $D^3$. Therefore, a $9\times 9$ unitary $V=\iden\otimes U$ diagonalizes the $D^i$ blocks of the Jamio{\l}kowski state $J_\Phi$:
\begin{equation}
\label{eq:jamiol_single_row}
V J_\Phi V^\dagger=\frac{1}{3}\left[
\begin{array}{ccccccccc}
\lambda_1&0&0&&&&&&\\
0&\lambda_2&0&&&&&&\\
0&0&0&&&&&&\\
&&&0&0&0&&&\\
&&&0&0&0&&&\\
&&&0&0&c&&&\\
&&&&&&\tilde{\lambda_1}&0&0\\
&&&&&&0&\tilde{\lambda_2}&0\\
&&&&&&0&0&\tilde{c}
\end{array}
\right],
\end{equation}
where blank spaces mean arbitrary entries and \mbox{$\lambda_1+\lambda_2=a+b$}. 

Without loss of generality let us assume that \mbox{$\lambda_1\geq\lambda_2$}. Then, using Theorem~\ref{thm:majo}, we get \mbox{$\v{\mu}(\lambda_1)\succ \v{\lambda}(J_\Phi)$} with $\v{\mu}^\downarrow(\lambda_1)$ given by
\begin{equation*}
\Scale[0.88]{
\frac{1}{3}[\lambda_1+\max(\tilde{\lambda_2},\tilde{c})+c,\lambda_2+\mathrm{med}(\tilde{\lambda_1},\tilde{\lambda_2},\tilde{c}),\min(\tilde{\lambda_1},\tilde{c}),0,\dots,0],}
\end{equation*}
where $\mathrm{med}$ denotes the second largest element of the set. Note also that, for fixed $a,b,c$, the vector $\v{\mu}(\lambda_1)$ is just a function of $\lambda_1$, since $\lambda_2=a+b-\lambda_1$. Now, maximising $\lambda_1$ maximizes both $\mu_1^\downarrow$ and $\mu_1^\downarrow+\mu_2^\downarrow$ (recall that $\mu_1^\downarrow+\mu_2^\downarrow+\mu_3^\downarrow$ is constant and equal to 1), and so $\v{\mu}(x)\succ\v{\mu}(y)$ for \mbox{$x\geq y$}. In order to find the optimal $\v{\mu}_{\mathrm{opt}}$ (optimal meaning that for all $\lambda_1$ we have $\v{\mu}_{\mathrm{opt}}\succ\v{\mu}(\lambda_1)$) we thus need to maximize $\lambda_1$. Recalling that we have two constraints, $0\leq\lambda_1\leq a+b$ and $\lambda_1\leq 1$, we arrive at two cases.

For $a+b\leq 1$ the maximal (and thus optimal) value of $\lambda_1$ is $a+b$, which also results in $\lambda_2=0$. The optimal bounding vector is then given by  
	\begin{equation}
	\Scale[0.83]{
	\!\!\!\v{\mu}_{\mathrm{opt}}=\frac{1}{3}[1+a+b+c,\max(\tilde{a}-b,\tilde{c}),\min(\tilde{a}-b,\tilde{c}),0,\dots,0].\!\!\!}
	\end{equation}
	Moreover, one can construct the Jamio{\l}kowski state $J^\C_\Phi$ that saturates this optimal bound, i.e., \mbox{$\v{\lambda}(J^\C_\Phi)=\v{\mu}_{\mathrm{opt}}$}. This can be achieved, again, by setting the adequate off-diagonal terms in Eq.~\eqref{eq:jamiol_single_row} to the maximal possible value allowed by the positivity condition. More precisely, we group $\lambda_1=a+b$, $c$ and $\tilde{\lambda_2}=1$ together, leaving the remaining two terms, $\tilde{\lambda_1}=\tilde{a}-b$ and $\tilde{c}$, ungrouped. As a result, $V J^\C_\Phi V^\dagger$ becomes a projector on three orthogonal pure states, which in turn means that the corresponding map is given by the following Kraus operators:
	\begin{equation*}
	K_1=\left[
	\begin{array}{ccc}
	\sqrt{a+b}&0&0\\
	0&0&\sqrt{c}\\
	0&1&0
	\end{array}
	\right],~
	K_2=\left[
	\begin{array}{ccc}
	0&0&0\\
	0&0&0\\
	\sqrt{\tilde{a}-b}&0&0
	\end{array}
	\right],
	\end{equation*}
	\begin{equation*}
	K_3=\left[
	\begin{array}{ccc}
	0&0&0\\
	0&0&0\\
	0&0&\sqrt{\tilde{c}}
	\end{array}
	\right].
	\end{equation*}
	Finally, using Eq.~\eqref{eq:kraus_connection} we conclude that the Kraus operators corresponding to the optimally coherified channel (with Jamio{\l}kowski state $J^\C_\Phi$) are given by $K_i U^\top$ with $K_i$ as above and $U$ defined by Eq.~\eqref{eq:diagonalising_U_single} with $\lambda_1=a+b$ and $\lambda_2=0$, i.e.,
	\begin{equation}
		U=\frac{1}{\sqrt{a+b}}\left[
		\begin{array}{ccc}
			\sqrt{a} & \sqrt{b} & 0\\
			\sqrt{b} & -\sqrt{a} & 0\\
			0 & 0 & \sqrt{a+b}
		\end{array}\right].
	\end{equation}

For $a+b\geq 1$ the maximal (and thus optimal) value of $\lambda_1$ is $1$, which also results in \mbox{$\lambda_2=a-\tilde{b}$}. The optimal bounding vector is then given by 
\begin{equation}
\Scale[0.88]{
\!\!\!\v{\mu}_{\mathrm{opt}}=\frac{1}{3}[\mu,3-\mu,0,\dots,0],\quad\mu=1+\max(\tilde{a}+\tilde{b},\tilde{c})+c.\!\!\!}
\end{equation}
The bound $\v{\mu}_{\mathrm{opt}}\succ \v{\lambda}(J_\Phi)$ can be saturated in a usual way -- by proper grouping of diagonal elements and setting the corresponding off-diagonal elements to the maximal value allowed by the positivity condition. If \mbox{$\tilde{c}\geq\tilde{a}+\tilde{b}$} then we group together $\lambda_1=1$, $c$ and $\tilde{c}$, with \mbox{$\lambda_2=a-\tilde{b}$} and \mbox{$\tilde{\lambda_2}=\tilde{a}+\tilde{b}$} forming the other group; otherwise we group together $\lambda_1$, $c$ and $\lambda_2$, with $\lambda_2$ and $\tilde{c}$ forming the other group. In the former case the resulting Kraus operators of the optimally coherified channel read
\begin{equation*}
K_1=\left[
\begin{array}{ccc}
1&0&0\\
0&0&\sqrt{c}\\
0&0&\sqrt{\tilde{c}}
\end{array}
\right]U^\top,~
K_2=\left[
\begin{array}{ccc}
0&\sqrt{a-\tilde{b}}&0\\
0&0&0\\
0&\sqrt{\tilde{a}+\tilde{b}}&0
\end{array}
\right]U^\top,
\end{equation*}
and in the latter case they read
\begin{equation*}
K_1=\left[
\begin{array}{ccc}
1&0&0\\
0&0&\sqrt{c}\\
0&\sqrt{\tilde{a}+\tilde{b}}&0
\end{array}
\right]U^\top,~
K_2=\left[
\begin{array}{ccc}
0&\sqrt{a-\tilde{b}}&0\\
0&0&0\\
0&0&\sqrt{\tilde{c}}
\end{array}
\right]U^\top,
\end{equation*}
with $U$ in both cases defined by Eq.~\eqref{eq:diagonalising_U_single} with $\lambda_1=1$ and \mbox{$\lambda_2=a-\tilde{b}$}, i.e.,
\begin{equation}
U=\frac{1}{\sqrt{\tilde{a}+\tilde{b}}}\left[
\begin{array}{ccc}
\sqrt{\tilde{b}} & \sqrt{\tilde{a}} & 0\\
\sqrt{\tilde{a}} & -\sqrt{\tilde{b}} & 0\\
0 & 0 & \sqrt{\tilde{a}+\tilde{b}}
\end{array}\right].
\end{equation}

\subsection{Double-row matrices}

The general form of $D^i$ matrices is as follows:	
\begin{equation*}
\Scale[0.92]{
D^1=
\left[
\begin{array}{ccc}
a & x & y\\
x^* & b & z\\
y^* & z^* & c
\end{array}
\right],~
D^2=
\left[
\begin{array}{ccc}
\tilde{a} & -x & -y\\
-x^* & \tilde{b} & -z\\
-y^* & -z^* & \tilde{c}
\end{array}
\right],~
D^3=0.}
\end{equation*}
We now note that a $3\times 3$ unitary matrix $U$ diagonalizing~$D^1$,
\begin{equation}
\label{eq:diagonalising_U_double}
\Scale[0.92]{
U=\left[
\begin{array}{ccc}
U_{11} & U_{12} & U_{13}\\
U_{21} & U_{22} & U_{23}\\
U_{31} & U_{32} & U_{33}
\end{array}
\right],\quad
UD^1U^\dagger=\left[
\begin{array}{ccc}
\lambda_1 & 0 & 0\\
0 & \lambda_2 & 0\\
0 & 0 & \lambda_3
\end{array}
\right],}
\end{equation}
also diagonalizes $D^2$ and $D^3$ (by keeping it unchanged). Therefore, a $9\times 9$ unitary $V=\iden\otimes U$ diagonalizes the $D^i$ blocks of the Jamio{\l}kowski state $J_\Phi$:
\begin{equation}
\label{eq:jamiol_double_row}
V J_\Phi V^\dagger=\frac{1}{3}\left[
\begin{array}{ccccccccc}
\lambda_1&0&0&&&&&&\\
0&\lambda_2&0&&&&&&\\
0&0&\lambda_3&&&&&&\\
&&&\tilde{\lambda_1}&0&0&&&\\
&&&0&\tilde{\lambda_2}&0&&&\\
&&&0&0&\tilde{\lambda_3}&&&\\
&&&&&&0&0&0\\
&&&&&&0&0&0\\
&&&&&&0&0&0
\end{array}
\right],
\end{equation}
where blank spaces mean arbitrary entries and \mbox{$\lambda_1+\lambda_2+\lambda_3=a+b+c$}. To shorten the notation we define \mbox{$s:=a+b+c$}.

Without loss of generality we may assume \mbox{$\lambda_1\geq\lambda_2\geq\lambda_3$}, so that \mbox{$\tilde{\lambda_1}\leq\tilde{\lambda_2}\leq\tilde{\lambda_3}$}. Then, using Theorem~\ref{thm:majo}, we have
\begin{equation}
\Scale[0.92]{
\v{\mu}(\lambda_1,\lambda_3)=\frac{1}{3}[\lambda_1+\tilde{\lambda_3},1,\lambda_3+\tilde{\lambda_1},0,\dots,0]\succ\v{\lambda}(J_\Phi).}
\end{equation}
Now, we observe that $\v{\mu}(x,y)\succ\v{\mu}(x',y')$ for $x\geq x'$ and $y\leq y'$. We thus aim at maximising the largest eigenvalue of $D^1$ while minimizing its smallest eigenvalue. Again, noting that we are constrained by $0\leq \lambda_i\leq s$ and $\lambda_i\leq 1$ we arrive at three distinct cases dependent on the value of $s$.

For $s\leq 1$ the maximal (and thus optimal) value of $\lambda_1$ is $s$, which also results in $\lambda_2=\lambda_3=0$. The optimal bounding vector is then given by  
\begin{equation}
\v{\mu}_{\mathrm{opt}}=\frac{1}{3}[1+s,1,1-s,0,\dots,0].
\end{equation}
The above optimal spectrum can be realized by the Jamio{\l}kowski state $J^\C_\Phi$ by simply setting in Eq.~\eqref{eq:jamiol_double_row} the off-diagonal terms between \mbox{$\lambda_1=s$} and $\tilde{\lambda_2}=1$ (or $\tilde{\lambda_3}=1$) to \mbox{$\sqrt{s}$}. Recalling the relation between the Kraus operators corresponding to Jamio{\l}kowski states connected via a local unitary, Eq.~\eqref{eq:kraus_connection}, we find that the Kraus decomposition of the optimally coherified channel is given by:
\begin{equation*}
K_1=\left[
\begin{array}{ccc}
\sqrt{s}&0&0\\
0&1&0\\
0&0&0
\end{array}
\right]U^\top,~
K_2=\left[
\begin{array}{ccc}
0&0&0\\
0&0&1\\
0&0&0
\end{array}
\right]U^\top,
\end{equation*}
\begin{equation*}
K_3=\left[
\begin{array}{ccc}
0&0&0\\
\sqrt{1-s}&0&0\\
0&0&0
\end{array}
\right]U^\top.
\end{equation*}
with $U$ being a unitary such that
	\begin{equation}
	U=\frac{1}{\sqrt{s}}\left[\begin{array}{ccc}
	\sqrt{a}&\times&\times\\
	\sqrt{b}&\times&\times\\
	\sqrt{c}&\times&\times
	\end{array}\right],
	\end{equation}
and $\times$ denoting arbitrary entries as long as $U$ stays unitary, e.g.,

\begin{equation*}
U=\left[\begin{array}{ccc}
\sqrt{\frac{a}{s}}&-\sqrt{\frac{b}{a+b}}&-\sqrt{\frac{ac}{(a+b)s}}\\
\sqrt{\frac{b}{s}}&\sqrt{\frac{a}{a+b}}&-\sqrt{\frac{bc}{(a+b)s}}\\
\sqrt{\frac{c}{s}}&0&\frac{a+b}{\sqrt{(a+b)s}}\\
\end{array}\right].
\end{equation*}
Also note that the position of $1$ in matrices describing $K_1$ and $K_2$ can be exchanged.
	
For $s\geq 2$ the optimal values are $\lambda_1=\lambda_2=1$ and $\lambda_3=s-2$. The optimal bounding vector is then given by 
\begin{equation}
\v{\mu}_{\mathrm{opt}}=\frac{1}{3}[4-s,1,s-2,0,\dots,0],
\end{equation}
which can be achieved by the coherified Jamio{\l}kowski state in an analogous way to the first case. This leads to the following decomposition of $\Phi^\C$ into the set of Kraus operators, 
\begin{equation*}
K_1=\left[
\begin{array}{ccc}
1&0&0\\
0&0&\sqrt{3-s}\\
0&0&0
\end{array}
\right]U^\top,~
K_2=\left[
\begin{array}{ccc}
0&1&0\\
0&0&0\\
0&0&0
\end{array}
\right]U^\top,
\end{equation*}
\begin{equation*}
K_3=\left[
\begin{array}{ccc}
0&0&\sqrt{s-2}\\
0&0&0\\
0&0&0
\end{array}
\right]U^\top,
\end{equation*}
with $U$ being a unitary such that
	\begin{equation}
	U=\frac{1}{\sqrt{3-s}}\left[\begin{array}{ccc}
	\times&\times&\sqrt{\tilde{a}}\\
	\times&\times&\sqrt{\tilde{b}}\\
	\times&\times&\sqrt{\tilde{c}}
	\end{array}\right].
	\end{equation}
and $\times$ denoting arbitrary entries as long as $U$ stays unitary, e.g.,
	\begin{equation}
	U=\left[\begin{array}{ccc}
	-\sqrt{\frac{\tilde{a}\tilde{c}}{(\tilde{a}+\tilde{b})(3-s)}}&-\sqrt{\frac{\tilde{b}}{\tilde{a}+\tilde{b}}}&\sqrt{\frac{\tilde{a}}{3-s}}\\
	-\sqrt{\frac{\tilde{b}\tilde{c}}{(\tilde{a}+\tilde{b})(3-s)}}&\sqrt{\frac{\tilde{a}}{\tilde{a}+\tilde{b}}}&\sqrt{\frac{\tilde{b}}{3-s}}\\
	\frac{\tilde{a}+\tilde{b}}{\sqrt{(\tilde{a}+\tilde{b})(3-s)}}&0&\sqrt{\frac{\tilde{c}}{3-s}}
	\end{array}\right].
	\end{equation}
Again, we note that the position of $1$ in matrices describing $K_1$ and $K_2$ can be exchanged.
	
Finally, for $1< s< 2$ the optimal values are $\lambda_1=1$, \mbox{$\lambda_2=s-1$} and $\lambda_3=0$. The optimal bounding vector is then given by 
\begin{equation}
\v{\mu}_{\mathrm{opt}}=\frac{1}{3}[2,1,0,\dots,0].
\end{equation}
The above spectrum can be achieved by the optimally coherified Jamio{\l}kowski state $J^\C_\Phi$ by setting the off-diagonal terms in Eq.~\eqref{eq:jamiol_double_row} appropriately. More precisely, one chooses the term between \mbox{$\lambda_1=1$} and $\tilde{\lambda_3}=1$ to be \mbox{$1$}, and the term between \mbox{$\lambda_2=s-1$} and $\tilde{\lambda_2}=2-s$ to be \mbox{$\sqrt{(s-1)(2-s)}$}. The resulting Kraus operators are given by
\begin{equation}
K_1=\left[
\begin{array}{ccc}
1&0&0\\
0&0&1\\
0&0&0
\end{array}
\right]U^\top,~
K_2=\left[
\begin{array}{ccc}
0&\sqrt{s-1}&0\\
0&\sqrt{2-s}&0\\
0&0&0
\end{array}
\right]U^\top,
\end{equation}
with
\begin{equation}
U=\left[
\begin{array}{ccc}
\sqrt{\frac{\tilde{b}}{\tilde{a}+\tilde{b}}}&\sqrt{\frac{\tilde{a}(a-\tilde{b})}{(\tilde{a}+\tilde{b})(s-1)}}&\sqrt{\frac{\tilde{a}c}{(\tilde{a}+\tilde{b})(s-1)}}\\
-\sqrt{\frac{\tilde{a}}{\tilde{a}+\tilde{b}}}&\sqrt{\frac{\tilde{b}(a-\tilde{b})}{(\tilde{a}+\tilde{b})(s-1)}}&\sqrt{\frac{\tilde{b}c}{(\tilde{a}+\tilde{b})(s-1)}}\\
0&\sqrt{\frac{c}{s-1}}&-\sqrt{\frac{a-\tilde{b}}{s-1}}
\end{array}
\right]
\end{equation} 
if $a+b\geq 1$ and
\begin{equation}
U=\left[
\begin{array}{ccc}
\sqrt{\frac{a\tilde{c}}{(a+b)(2-s)}}&\sqrt{\frac{a(\tilde{a}-b)}{(a+b)(2-s)}}&\sqrt{\frac{b}{a+b}}\\
\sqrt{\frac{b\tilde{c}}{(a+b)(2-s)}}&\sqrt{\frac{b(\tilde{a}-b)}{(a+b)(2-s)}}&-\sqrt{\frac{a}{a+b}}\\
-\sqrt{\frac{\tilde{a}-b}{2-s}}&\sqrt{\frac{\tilde{c}}{2-s}}&0
\end{array}
\right]
\end{equation}
if $a+b\leq 1$.

\section{Polygon constraints}
\label{app:polygon}

First, we derive the expression for the purity bound, Eq.~\eqref{eq:purity_bound}. The expression for $\Delta_1$, Eq.~\eqref{eq:purity_deficit1}, comes directly from the fact that \mbox{$|D^i_{kl}|^2\leq\alpha^{i}_{kl}T_{ik}T_{il}$}. To obtain $\Delta_2$, Eq.~\eqref{eq:purity_deficit2}, let us start by parametrizing the matrix from Eq.~\eqref{eq:Sylvesters}, i.e., the $3\times 3$ submatrix of $J_\Phi$, in the following way
\begin{equation}
A:=\frac{1}{d}\left[
\begin{array}{ccc}
T_{ik}&x\sqrt{T_{ik}T_{il}}&y\sqrt{T_{ik}T_{jm}}\\
x\sqrt{T_{ik}T_{il}}&T_{il}&z\sqrt{T_{il}T_{jm}}\\
y\sqrt{T_{ik}T_{jm}}&z\sqrt{T_{il}T_{jm}}&T_{jm}
\end{array}
\right],
\end{equation}
with \mbox{$0\leq x,y,z\leq 1$}. We then have $\det A\geq 0$ if and only if
\begin{equation}
\label{eq:positive_det}
1 + 2 x y z - x^2 - y^2 - z^2\geq 0.
\end{equation}
Now, our aim is to upper-bound the squared moduli of the off-diagonal terms of $A$ (for fixed $T$), given the above constraint and the fact that $x\leq\alpha^i_{kl}$ for some $\alpha^i_{kl}<1$. First, assume that $x$ is fixed, so that effectively we want to find the maximum of \mbox{$T_{ik}y^2+T_{il}z^2$} (in fact, the optimal choice is to maximize $x$, i.e., set $x=\alpha^i_{kl}$). It is straightforward to check that it is achieved at the boundary of the constraint, i.e., when Eq.~\eqref{eq:positive_det} becomes an equality. One can then solve for $y$, substitute it to \mbox{$T_{ik}y^2+T_{il}z^2$}, and find the maximum of the resulting expression over~$z$. This leads to the following bound on the off-diagonal terms of~$A$:
\begin{equation}
|A_{13}|^2+|A_{23}|^2\leq\frac{T_{jm}}{2d^2}\left(T_{ik}+T_{il}+\beta^i_{kl}\right),
\end{equation}
with $\beta^i_{kl}$ defined in Eq.~\eqref{eq:beta}. As in order to achieve unit purity one needs \mbox{$|A_{13}|^2+|A_{23}|^2=T_{jm}(T_{ik}+T_{il})/d^2$}, the above bound leads to the following deficit of purity:
\begin{equation}
\delta^i_{kl}:=\frac{T_{jm}}{2d^2}\left(T_{ik}+T_{il}-\beta^i_{kl}\right).
\end{equation}
Finally, the above deficit adds up for every choice of $T_{jm}$ not equal to $T_{ik}$ or $T_{il}$ (i.e., for all off-diagonal elements sharing row or column with $T_{ik}$ or $T_{il}$ in Fig.~\ref{fig:bistochastic_constraint}b), so that using \mbox{$\sum_{i,j}T_{ij}=d$}, we finally arrive at Eq.~\eqref{eq:purity_deficit2}.

We now proceed to the proof of the majorization bound, Eq.~\eqref{eq:majo_bistoch}. Note that, using Lemma~\ref{lemma:decomposition} from Appendix~\ref{app:majo_proof}, we can rewrite $D^i$ (up to permutations) as
\begin{equation}
D^i=
\left[
\begin{array}{cc}
A&X\\
X^\dagger&B
\end{array}
\right]
=
U
\left[
\begin{array}{cc}
A&0\\
0&0
\end{array}
\right]U^\dagger+
V\left[
\begin{array}{cc}
0&0\\
0&B
\end{array}
\right]V^\dagger,
\end{equation}
with 
\begin{equation}
A=\left[
\begin{array}{cc}
T_{ik} & \sqrt{\alpha^i_{kl}T_{ik}T_{il}}\\
\sqrt{\alpha^i_{kl}T_{ik}T_{il}} & T_{il}
\end{array}
\right],
\end{equation}
and $U,V$ being unitary. The eigenvalues of $A$ are given by
\begin{equation}
\v{\lambda}(A)=\frac{1}{2}[T_{ik}+T_{il}+\beta^i_{kl},T_{ik}+T_{il}-\beta^i_{kl},0,\dots,0],
\end{equation}
whereas the largest eigenvalue of $B$ is constrained by
\begin{equation}
\Scale[0.9]{
\lambda_1(B)\leq\tr{B}=\tr{D^i}-T_{ik}-T_{il}=1-T_{ik}-T_{il},}
\end{equation} \normalsize
where we used the fact that $T$ is bistochastic. Thus, using Lemma~\ref{lemma:KyFan} and choosing $k=k_i$ and $l=l_i$ minimizing $\alpha^i_{kl}$, we arrive at Eq.~\eqref{eq:majo_bistoch}.

Note that the above construction can be easily generalized to cases where for a given $D^i$ there are many pairs $(k,l)$ for which $\alpha^i_{kl}<1$. Instead of a $2\times 2$ matrix $A$, one simply chooses it to contain all off-diagonal elements of $D^i$ that are constrained beyond the positivity condition, finds its eigenvalues, and obtains a tighter bound using Lemma~\ref{lemma:KyFan} again.

\bibliography{Bibliography_channels}

\end{document}